\title{Filling Crosswords is Very Hard} 
\titlerunning{Filling Crosswords is Very Hard}
\author{Laurent Gourv{\`e}s}{Universit\'e Paris-Dauphine, Universit\'e PSL, CNRS, LAMSADE, 75016, Paris, France}{laurent.gourves@dauphine.fr}{}{}
\author{Ararat Harutyunyan}{Universit\'e Paris-Dauphine, Universit\'e PSL, CNRS, LAMSADE, 75016, Paris, France}
{ararat.harutyunyan@dauphine.fr}{}{}
\author{Michael Lampis}{Universit\'e Paris-Dauphine, Universit\'e PSL, CNRS, LAMSADE, 75016, Paris, France}
{michail.lampis@dauphine.fr}{}{}
\author{Nikolaos Melissinos}{Universit\'e Paris-Dauphine, Universit\'e PSL, CNRS, LAMSADE, 75016, Paris, France}
{nikolaos.melissinos@dauphine.eu}{}{}
\authorrunning{Gourv{\`e}s, Harutyunyan, Lampis, Melissinos}
\keywords{Crossword Puzzle, Treewidth, ETH}
\newcommand{\I}{\mathcal{I}}
\newtheorem{property}{Property}
\newcommand{\w}{d}
\def\D{{\cal D}}
\def\L{{\cal L}}
\def\I{{\cal I}}
\def\T{{\cal T}}
\def\tw{\texttt{tw}}
\newcommand{\commentLaurent}[1]{\textcolor{red}{Laurent: #1}}
\def\CPD{\textsc{CP-Dec}}
\def\CPO{\textsc{CP-Opt}}
\def\CPDlong{\textsc{Crossword Puzzle Decision}}
\def\CPOlong{\textsc{Crossword Puzzle Optimization}}
\tikzstyle{vertex}=[circle, draw, inner sep=1.2pt, minimum width=4pt, minimum size=0.4cm]
\tikzstyle{vertex2}=[circle, draw, inner sep=0pt, minimum width=4pt, minimum size=0.15cm]
\begin{document}

\maketitle

\begin{abstract}

We revisit a classical crossword filling puzzle which already appeared in Garey\&Jonhson's book. We are given a grid with $n$ vertical and horizontal slots and a dictionary with $m$ words and are asked to place words from the dictionary in the slots so that shared cells are consistent. We attempt to pinpoint the source of intractability of this problem by carefully taking into account the structure of the grid graph, which contains a vertex for each slot and an edge if two slots intersect. Our main approach is to consider the case where this graph has a tree-like structure. Unfortunately, if we impose the common rule that words cannot be reused, we discover that the problem remains NP-hard under very severe structural restrictions, namely, if the grid graph is a union of stars and the alphabet has size $2$, or the grid graph is a matching (so the crossword is a collection of disjoint crosses) and the alphabet has size $3$. The problem does become slightly more tractable if word reuse is allowed, as we obtain an $m^{\tw}$ algorithm in this case, where $\tw$ is the treewidth of the grid graph. However, even in this case, we show that our algorithm cannot be improved to obtain fixed-parameter tractability. More strongly, we show that under the ETH the problem cannot be solved in time $m^{o(k)}$, where $k$ is the number of horizontal slots of the instance (which trivially bounds $\tw$).

Motivated by these mostly negative results, we also consider the much more restricted case where the problem is parameterized by the number of slots $n$. Here, we show that the problem does become FPT (if the alphabet has constant size), but the parameter dependence is exponential in $n^2$. We show that this dependence is also justified: the existence of an algorithm with running time $2^{o(n^2)}$, even for binary alphabet, would contradict the randomized ETH. Finally, we consider an optimization version of the problem, where we seek to place as many words on the grid as possible. Here it is easy to obtain a $\frac{1}{2}$-approximation, even on weighted instances, simply by considering only horizontal or only vertical slots. We show that this trivial algorithm is also likely to be optimal, as obtaining a better approximation ratio in polynomial time would contradict the Unique Games Conjecture. The latter two results apply whether word reuse is allowed or not.

\end{abstract}

\section{Introduction}

Crossword puzzles are one-player games where the goal is to fill a (traditionally two-dimensional) grid with words. Since their first appearance more than 100 years ago, crossword puzzles have rapidly become popular. Nowadays, they can be found in many newspapers and magazines around the world like the {\em New York Times} in the USA, or {\em Le Figaro} in France. 
Besides their obvious recreational interest, crossword puzzles are valued tools in education \cite{CC83} and medicine. In particular, crossword puzzles participation seems to delay the onset of  memory decline \cite{PHDBLV11}.
They are also helpful for developing and testing computational techniques; see for example \cite{Rosin11}. In fact, both the design and the completion of a puzzle are challenging.  In this article, we are interested in the task of solving a specific type of crossword puzzle.  

There are different kinds of crossword puzzles. In the most famous ones, some clues are given together with the place  where the answers should be located. A solution contains words that must be consistent with the given clues, and the intersecting pairs of words are constrained to agree on the letter they share. {\em Fill-in} crossword puzzles 
do not come with clues. 
Given a list of words and a grid in which some slots are identified, the objective is to fill all the slots with the given words. The list of words is typically succinct and provided explicitly.

In a variant of fill-in crossword puzzle currently proposed in a French TV magazine \cite{TM}, 
one has to find up to 14 words and place them in a grid (the grid is the same for every instance, see Figure \ref{fig1} for an illustration). The words are not explicitly listed but they must be {\em valid} (for instance, belong to the French language). In an instance of the game, some specified letters have a positive weight; the other letters have weight zero. 
The objective is to find a solution whose weight -- defined as the total sum of the letters written in the grid -- is at least a given threshold.

\begin{figure}
\center
\begin{tikzpicture}[scale=0.3]

\fill[color=black, line width=0pt] (0,0)--(2,0)--(2,5)--(3,5)--(3,4)--(4,4)--(4,7)--(1,7)--(1,1)--(0,1)--(0,0);

\fill[color=black, line width=0pt] (3,1)--(6,1)--(6,5)--(5,5)--(5,2)--(4,2)--(4,3)--(3,3)--(3,1);

\fill[color=black, line width=0pt] (5,6)--(6,6)--(6,8)--(5,8)--(5,6);

\fill[color=black, line width=0pt] (6,0)--(7,0)--(7,1)--(6,1)--(6,0);

\fill[color=black, line width=0pt] (7,6)--(8,6)--(8,8)--(7,8)--(7,6);

\fill[color=black, line width=0pt] (10,1)--(7,1)--(7,5)--(8,5)--(8,2)--(9,2)--(9,3)--(10,3)--(10,1);

\fill[color=white, line width=0pt] (13,0)--(11,0)--(11,5)--(10,5)--(10,4)--(9,4)--(9,7)--(12,7)--(12,1)--(13,1)--(13,0);

\fill[color=black, line width=0pt] (13,0)--(11,0)--(11,5)--(10,5)--(10,4)--(9,4)--(9,7)--(12,7)--(12,1)--(13,1)--(13,0);



\foreach \k in {0,...,8}
{
\draw (0,\k)--(13,\k);
}

\foreach \k in {0,...,13}
{
\draw (\k,0)--(\k,8);
}

\draw[thick] (0,0)--(13,0)--(13,8)--(0,8)--(0,0);

\end{tikzpicture}
\caption{Place valid words in this grid. In a possible instance, letters S, U, I, V, R, E, and T have weight 7, 5, 4, 2, 6, 1, and 3, respectively. 
Any other letter has null weight. 
Try to obtain at least 330 points. \label{fig1}}
\end{figure}
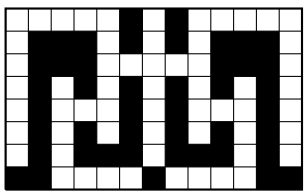

The present work deals with a theoretical study of this fill-in crossword puzzle (the grid is not limited to the one of Figure \ref{fig1}). We are mainly interested in two problems: Can the grid be entirely completed? How can the weight of a solution be maximized? Hereafter, these problems are 
called {\CPDlong} and {\CPOlong} 
({\CPD} and {\CPO} in short), respectively.


{\CPD} is not new; see  GP14 in \cite{GJ79}. 
The proof of NP-completeness is credited to a personal communication with Lewis and Papadimitriou. 
Thereafter, an alternative NP-completeness proof appeared in \cite{EHRS12} (see also \cite{LMS15}).
Other articles on crossword puzzles exist and they 
are mostly empirically validated 
techniques coming from 
Artificial Intelligence and Machine Learning; see for example 
\cite{GFHT90,MG97,LKS99,AB08,Rosin11,RDMG12} an references therein.  

\subparagraph*{Our Results} Our goal in this paper is to pinpoint the relevant structural parameters that make filling crossword puzzles intractable. We begin by examining the structure of the given grid. It is natural to think that, if the structure of the grid is tree-like, then the problem should become easier, as the vast majority of problems are tractable on graphs of small treewidth. We only partially confirm this intuition: by taking into account the structure of a graph that encodes the intersections between slots (the grid-graph) we show in Section~\ref{sec:tree:like} that {\CPO} can be solved in polynomial time on instances of constant treewidth. However, our algorithm is not fixed-parameter tractable and, as we show, this cannot be avoided, even if one considers the much more restricted case where the problem is parameterized by the number of horizontal slots, which trivially bounds the grid-graph's treewidth (Theorem~\ref{thm:byrows}). More devastatingly, we show that if we also impose the natural rule that words cannot be reused, the problem already becomes NP-hard when the grid graph is a matching for alphabets of size 3 (Theorem~\ref{thm:matching:3letters}), or a union of stars for a binary alphabet (Theorem~\ref{thm:forest:2letters}). Hence, a tree-like structure does not seem to be of much help in rendering crosswords tractable.

We then go on to consider {\CPO} parameterized by the total number of slots $n$. This is arguably a very natural parameterization of the problem, as in real-life crosswords, the size of the grid can be expected to be significantly smaller than the size of the dictionary. We show that in this case the problem does become fixed-parameter tractable (Corollary~\ref{corollary:algslots}), but the running time of our algorithm is exponential in $n^2$. Our main result is to show that this disappointing dependence is likely to be best possible: even for a binary alphabet, an algorithm solving {\CPD} in time $2^{o(n^2)}$ would contradict the randomized ETH (Theorem \ref{thm:all:slots:rand:eth}). Note that all our positive results up to this point work for the more general {\CPO}, while our hardness results apply to {\CPD}.

Finally, in Section \ref{sec:approximability} we consider the approximability of {\CPO}. Here, it is easy to obtain a $\frac{1}{2}$-approximation by only considering horizontal or vertical slots. We are only able to slightly improve upon this, giving a polynomial-time algorithm with ratio $\frac{1}{2}+O(\frac{1}{n})$. Our main result in this direction is to show that this is essentially best possible: obtaining an algorithm with ratio $\frac{1}{2}+\epsilon$ would falsify the Unique Games Conjecture (Theorem \ref{thm:UGC}).

\section{Problem Statement and Preliminaries} \label{sec:prel}

We are given a dictionary $\D=\{d_1,\ldots d_m\}$ whose words are constructed on an alphabet $\L=\{l_1,\ldots l_\ell\}$, and a two-dimensional grid consisting of horizontal and vertical slots. A slot is composed of consecutive cells. 
Horizontal slots do not intersect each other; the same goes for vertical slots. However horizontal slots can intersect vertical slots.  A cell is {\em shared} if it lies at the intersection of two slots. 
Unless specifically stated,  $n$, $m$ and $\ell$ denote the total number of slots, the size of $\D$, and the size of $\L$, respectively. Finally, let us mention that we consider only instances where the alphabet is of constant size, i.e., $\ell = O(1)$.


In a feasible solution, each slot $S$ receives either a word of $\D$ of length $|S|$, or nothing (we sometimes say that a slot receiving nothing gets an {\em empty word}). Each cell gets at most one letter, and the words assigned to two intersecting slots must agree on the letter placed in the shared cell. 
All filled horizontal slots get words written from left to right (across) 
while all vertical slots get words written from top to bottom (down).


There is a weight function $w: \L \rightarrow \mathbb{N}$. 
The weight of a solution is the total sum of the weights of the 
letters placed in the grid. 
Observe that, for a given solution, the total weight of all filled-in words is not the same as the weight of this solution as, in the latter, the letters of the shared cells are counted only once.

The two main problems studied in this article are the following. 
Given a grid, a dictionary $\D$ on alphabet $\L$, and a weight function $w: \L \rightarrow \mathbb{N}$, the objective of 
{\CPOlong} ({\CPO} in short) is to find a feasible solution of maximum weight. 
Given a grid and a dictionary $\D$ on alphabet $\L$, the question posed by 
{\CPDlong} ({\CPD} in short) is whether the grid can be completely filled or not?

Two cases will be considered: whether each word is used at most once, or if each word can be assigned multiple times. 
In this article, we will sometimes suppose that some cells are pre-filled with some elements of $\L$. 
In this case, a solution is feasible if it is consistent with the pre-filled cells. 
Below we propose a first result when 
all the shared cells are pre-filled.
\begin{proposition} \label{prefilled:shared:cells}
{\CPD} and {\CPO} can be solved in polynomial time if all the shared cells in the grid are pre-filled, whether word reuse is allowed or not.
\end{proposition}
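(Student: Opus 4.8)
The plan is to exploit the fact that pre-filling every shared cell \emph{decouples} the slots: whether a word $d$ can be placed in a slot $S$ no longer depends on the words chosen for the slots intersecting $S$, but only on the (fixed) letters sitting in the pre-filled cells of $S$. Call a pair $(S,d)$ \emph{compatible} if $|d|=|S|$ and $d$ agrees with every pre-filled cell of $S$ (in particular with all shared cells of $S$). It is then immediate from the definitions that a placement is feasible if and only if every slot that receives a non-empty word receives a word compatible with it, with no further interaction between slots. First I would establish this equivalence and build, in polynomial time, the bipartite \emph{compatibility graph} $H$ having the slots on one side, the words of $\D$ on the other, and an edge for each compatible pair.

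For \CPD{} this already settles both variants. If word reuse is allowed, the grid can be completely filled exactly when every slot has at least one compatible word, which is checked by a single scan of $H$. If reuse is forbidden, a complete filling corresponds precisely to a matching of $H$ that saturates the slot side, so it suffices to compute a maximum matching of $H$ (e.g.\ by Hopcroft--Karp) and test whether it saturates all slots; this is polynomial.

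For \CPO{} I would handle the weights by simple accounting. In the present setting every feasible solution contains the same pre-filled letters, which contribute a constant $W_0 := \sum_{c \text{ pre-filled}} w(\text{letter in } c)$ to its weight. For a compatible pair $(S,d)$ define the \emph{marginal weight} $w_S(d) := w(d) - \sum_{c \in S,\ c \text{ pre-filled}} w(\text{letter in } c)$. Since $d$ is compatible with $S$, the subtracted terms are exactly the weights of the letters of $d$ on the pre-filled positions of $S$, so $w_S(d)$ equals the total weight of the letters of $d$ on the remaining cells of $S$; in particular $w_S(d)\ge 0$, because $w$ is non-negative. As all shared cells are pre-filled, a non-pre-filled cell lies in at most one slot, so the non-pre-filled cells of distinct slots are pairwise disjoint, and the weight of a feasible solution equals $W_0$ plus the sum of the marginal weights of the (slot, word) pairs it uses.

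Maximizing this quantity reduces to a weighted assignment problem on $H$ with edge weights $w_S(d)$. If reuse is allowed, the slots are independent and the optimum is $W_0 + \sum_{S} \max\{\, w_S(d) : (S,d) \in E(H) \,\}$, where the inner maximum is $0$ when $S$ has no compatible word; this is computed directly. If reuse is forbidden, we instead compute a maximum-weight matching of $H$ (via the Hungarian algorithm, or as a min-cost max-flow), add $W_0$, and output the corresponding placement; as the edge weights are non-negative, an ordinary maximum-weight matching is optimal (there is never an incentive to leave a slot empty when it could be filled). All subroutines run in polynomial time, which proves the proposition. The only step needing care is the weight bookkeeping of the previous paragraph---checking that each pre-filled shared cell is counted exactly once and that the marginal weights are well defined and non-negative; the rest is a routine reduction to bipartite matching.
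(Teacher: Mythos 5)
Your proposal is correct and takes essentially the same route as the paper: once all shared cells are fixed the slots decouple, so with word reuse one greedily assigns each slot its best compatible word, and without reuse one computes a maximum-weight matching in the slot--word compatibility graph with exactly the marginal edge weights you describe (word value minus the weight of the imposed letters). Your explicit bookkeeping of the constant $W_0$ and the non-negativity of the marginal weights merely spells out what the paper's shorter argument leaves implicit.
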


\begin{proof}
If word reuse is allowed,
then for each combination of letters placed in these cells, we greedily fill
out the rest of each slot with the maximum value word that can still be placed
there.  This is guaranteed to produce
the optimal solution. On the other hand, if word reuse is not allowed, we
construct a bipartite graph, with elements of $\D$ on one side and the slots on
the other, and place an edge between a word and a slot if the word can still be
placed in the slot. If we give each edge weight equal to the value of its
incident word reduced by the weight of the letters imposed by the shared cells
of the slot, then  an optimal solution corresponds to a maximum weight
matching. 
\end{proof}

One can associate a bipartite graph, hereafter called the {\em grid graph}, with each grid: each slot is a vertex and two vertices share an edge if the corresponding slots overlap. The grid (and then, the grid graph) is not necessarily connected.   

Let us also note that so far we have been a bit vague about the encoding of the problem. Concretely, we could use a simple representation which lists for each slot the coordinates of its first cell, its size, and whether the slot is horizontal or vertical; and then supplies a list of all words in the dictionary and an encoding of the weight function. Such a representation would allow us to perform all the basic operations needed by our algorithms in polynomial time, such as deciding if it is possible to place a word $\w$ in a slot $S$, and which letter would then be placed in any particular cell of $S$. However, one drawback of this encoding is that its size may not be polynomially bounded in $n+m$, as some words may be exponentially long. We can work around this difficulty by using a more succinct representation: we are given the same information as above regarding the $n$ slots; for each word we are given its total weight; and for each slot $S$ and word $\w$, we are told whether $\w$ fits exactly in $S$, and if yes, which letters are placed in the cells of $S$ which are shared with other slots. Since the number of shared cells is $O(n^2)$ this representation is polynomial in $n+m$ and it is not hard to see that we are still able to perform any reasonable basic operation in polynomial time and that we can transform an instance given in the simple representation to this more succinct form. Hence, in the remainder, we will always assume that the size of the input is polynomially bounded in $n+m$.




We will rely on the Exponential Time Hypothesis (ETH) of Impagliazzo, Paturi,
and Zane \cite{ImpagliazzoPZ01}, which states the following:

\begin{conjecture}\label{conj:eth} Exponential Time Hypothesis: there exists an
$\epsilon>0$, such that \textsc{3-SAT} on instances with $n$ variables and $m$
clauses cannot be solved in time $2^{\epsilon(n+m)}$.  \end{conjecture}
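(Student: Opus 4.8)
The final statement is labeled a \emph{Conjecture} rather than a theorem, and for good reason: the Exponential Time Hypothesis is a widely believed but unproven hypothesis, so I would not attempt to prove it at all. Any proof of this statement would in particular settle the $\mathsf{P}$ versus $\mathsf{NP}$ question. To see this, observe that if $\mathsf{P}=\mathsf{NP}$ then \textsc{3-SAT} admits a polynomial-time algorithm, whose running time is $2^{o(n+m)}$ and in particular beats $2^{\epsilon(n+m)}$ for every fixed $\epsilon>0$ on all sufficiently large inputs; this directly contradicts the conjecture. Hence establishing the ETH entails proving $\mathsf{P}\neq\mathsf{NP}$, a problem far beyond the reach of any current technique, and certainly beyond the scope of this paper.

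In fact the ETH is strictly stronger than $\mathsf{P}\neq\mathsf{NP}$: even granting that \textsc{3-SAT} has no polynomial-time algorithm, it remains conceivable a priori that it admits a subexponential algorithm running in time $2^{o(n)}$, which would refute the ETH while leaving $\mathsf{P}\neq\mathsf{NP}$ intact. Thus no reduction from $\mathsf{P}\neq\mathsf{NP}$—even were the latter already known—could deliver the conjecture, and there is at present no plausible candidate approach for proving it unconditionally. The right reading of the statement is therefore that it is being posited as an axiom of the complexity-theoretic framework, not advanced as a lemma to be discharged.

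Consequently, the role of this statement in the paper is not to be proved but to be \emph{assumed}. The plan is to treat the ETH as a working hypothesis and to derive from it the \emph{conditional} lower bounds announced earlier, such as the $m^{o(k)}$ and $2^{o(n^2)}$ impossibility results for {\CPD}. The substantive mathematical work lies not in this statement but in the reductions that transport the hardness of \textsc{3-SAT}, as quantified by the ETH, to {\CPD}: one must design polynomial-time reductions whose blow-up in the relevant parameter is controlled tightly enough that a hypothetical fast crossword algorithm would yield a \textsc{3-SAT} algorithm violating the conjecture. The only genuine obstacle at this point is the conceptual one of recognizing that the ETH is an input to the argument rather than an output of it; the real effort is deferred to those later reductions.
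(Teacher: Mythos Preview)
Your reading is correct and matches the paper's treatment: the ETH is stated as a conjecture to be \emph{assumed}, not proved, and the paper immediately proceeds to explain how it (and its randomized variant) will be used as hypotheses for the conditional lower bounds in \cref{thm:byrows} and \cref{sec:slots}. There is nothing further to add.
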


Note that it is common to use the slightly weaker formulation which states the
ETH as the assumption that \textsc{3-SAT} cannot be solved in time
$2^{o(n+m)}$.  This is known to imply that $k$-\textsc{Independent Set} cannot
be solved in time $n^{o(k)}$\cite{CyganFKLMPPS15}. We use this fact in
\cref{thm:byrows}. In \cref{sec:slots} we will rely on the randomized version
of the ETH, which has the same statement as \cref{conj:eth} but for randomized
algorithms with expected running time $2^{\epsilon(n+m)}$.

\section{When the Grid Graph is Tree-like} \label{sec:tree:like}

In this section we are considering instances of {\CPD} and {\CPO} where the grid graph is 
similar to a tree. First, we give an algorithm for both problems in cases where the grid graph 
has bounded treewidth and we are allowed to reuse words and we show that this algorithm is 
essentially optimal. Then, we show that {\CPD} and {\CPO} are much harder to deal with, in the
case where we are not allowed to reuse words, by proving that the problems are NP-hard 
even for instances where the grid graph is just a matching. 
For the instances such that {\CPD} is NP-hard, we know that {\CPO} is NP-hard. That happens because we can assume that all the letters have weight equal to $1$ hence a solution for {\CPD} is an optimal solution for {\CPO}.

\subsection{Word Reuse}



We propose a dynamic programming algorithm for {\CPO} and hence also for {\CPD}. Note that it can be extended to the case where some cells of the instance are pre-filled.

\begin{theorem} \label{DP:treewidth:CPO}
If we allow word reuse, then {\CPO} can be solved in time $(m+1)^{\tw}(n+m)^{O(1)}$ on inputs where $\tw$ is the treewidth of the grid graph.
\end{theorem}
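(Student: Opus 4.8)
The plan is a standard bottom-up dynamic programming over a tree decomposition of the grid graph $G$, and the only genuinely problem-specific ingredient is rewriting the objective as a sum of terms that are local with respect to $G$. The key structural fact I would use is that two horizontal slots never intersect and two vertical slots never intersect, so each shared cell lies in exactly one horizontal and one vertical slot; hence every consistency constraint is binary and is ``witnessed'' by an edge of $G$, and it suffices to control, at each bag, the word placed in each slot of that bag.

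Concretely, I would first localize the objective. Working in the succinct encoding, for every slot $S$ and every word $d$ with $|d|=|S|$ we know the weight of $d$ and the letters $d$ places in the shared cells of $S$. Define $w'(S,d)$ to be the weight of $d$ minus the total weight of the letters $d$ puts in the shared cells of $S$, and set $w'(S,\varepsilon)=0$ for the empty word. For an edge $e=\{S,S'\}$ of $G$, with $c$ its shared cell, and given the (possibly empty) words assigned to $S$ and $S'$, let $\chi(e)$ be the weight of the letter that ends up in $c$; this is $0$ if both slots are empty and is well-defined by consistency when both are filled. Then for any feasible solution the weight equals $\sum_{S\text{ filled}} w'(S,d_S)+\sum_{e\in E(G)}\chi(e)$, because each occupied non-shared cell is counted once through the $w'$-term of its slot and each occupied shared cell once through the $\chi$-term of its edge. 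All these quantities are computable in $(n+m)^{O(1)}$ time from the input.

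Next I would take a nice tree decomposition of $G$ of width $\tw$ with $O(\tw\cdot n)$ nodes (assumed given, or computed by standard means; this is never the bottleneck, as $\tw=O(1)$ in the applications we care about) and process it bottom-up, keeping for each node $t$ a table $A_t$ indexed by the \emph{locally valid} assignments $\phi:B_t\to\D\cup\{\varepsilon\}$ -- those respecting slot lengths and agreeing on every shared cell between two slots that both lie in $B_t$. There are at most $(m+1)^{\tw+1}$ such $\phi$. The entry $A_t[\phi]$ stores the maximum, over all ways of extending $\phi$ to the slots appearing in the subtree of $t$, of the total weight accumulated so far, where -- using the standard bookkeeping for weighted tree-decomposition DP (cf.\ \cite{CyganFKLMPPS15}) -- we charge $w'(S,\cdot)$ at the forget node of $S$ and $\chi(e)$ at an appropriate forget node incident to $e$, so that each term is counted exactly once. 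Leaf, introduce, forget and join nodes are handled in the usual manner: introduce nodes copy the child's entries after restriction, forget nodes take a maximum over the $m+1$ choices for the forgotten slot and add its $w'/\chi$ contributions, and join nodes combine the two children's entries for the same $\phi$. Each of the $O(\tw\cdot n)$ nodes is then processed in time $(m+1)^{\tw+1}(n+m)^{O(1)}$, for a total of $(m+1)^{\tw}(n+m)^{O(1)}$ once the factor $m+1$ is absorbed into the polynomial part; the optimum is read at the root. Pre-filled cells are handled for free: a pre-filled cell simply restricts, for each incident slot, the words that still fit, and a pre-filled shared cell is treated like the letter imposed by an edge.

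I do not expect a real obstacle here; the two points needing attention are the localization of the objective above (so that shared cells are never double-counted) and the observation that keeping one word per bag-slot in the state is affordable \emph{only because} word reuse is allowed. Without reuse the state would also have to record which words are already used, which cannot be bounded in terms of $\tw$ -- and indeed Theorem~\ref{thm:matching:3letters} shows that the no-reuse variant is NP-hard already when $G$ is a matching.
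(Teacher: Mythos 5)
Your proposal is correct and follows essentially the same route as the paper: a bottom-up dynamic program over a nice tree decomposition whose states are the $(m+1)^{|B_t|}$ assignments of dictionary words (or the empty word) to the slots of each bag, with the empty-word option making the slot-left-blank case uniform. The only difference is cosmetic bookkeeping --- you localize the objective into per-slot and per-edge terms charged at forget nodes, whereas the paper carries the pair $(W,W_m)$ through introduce/forget/join nodes --- but both give the same state space and the same $(m+1)^{\tw}(n+m)^{O(1)}$ bound.
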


\begin{proof}
As the techniques we are going to use are standard we are sketching some details. For more details on tree decomposition (definition and terminology) see \cite[Chap. 7]{CyganFKLMPPS15}. Assuming that we have a rooted nice tree decomposition of the grid graph, we are going to perform dynamic programming on the nodes of this tree decomposition. For a node $B_t$ of the given tree decomposition of the grid graph we denote by $B_t^{\downarrow}$ the set of vertices of the grid graph that appears in the nodes of the subtree with $B_t$ as a root. Since each vertex of the grid graph corresponds to a slot, we interchangeably mention a vertex of the grid graph and its corresponding slot. In particular, we say that a solution $\sigma$ assigns words to the vertices of the grid graph, and $\sigma(v)$ denotes the word assigned to $v$.

For each node $B_t$ of the tree decomposition we are going to keep all the triplets $(\sigma,W,W_t)$ such that:
\begin{itemize}
    \item $\sigma$ is an assignment of words to the vertices of $B_t$; 
    \item $W$ is the weight of $\sigma$ restricted to the vertices appearing in $B_t$; 
    \item and $W_{m}$ is the maximum weight, restricted to the vertices appearing in $B_t^{\downarrow}$, of an assignment consistent with $\sigma$. 
\end{itemize}
In order to create all the possible triplets for all the nodes of the tree decomposition we are going to explore the nodes 
from leaves to the root. Therefore, each time we visit a node we assume that we have already created the triplets for all its children. Let us explain how we deal with the different types of nodes. 

In the Leaf nodes we have no vertices so we keep an empty assignment ($\sigma$ does not assign any word) and the weights $W$ and $W_m$ are equal to $0$.

For an Introduce node $B_t$ we need to take in consideration its child node. Assume that $u$ is the introduced vertex; for each triplet $(\sigma, W, W_m)$ of the child node we are going to create all the triplets $(\sigma', W', W'_m)$ for the new node as follows. First we find 
all the words $\w \in \D$ that fit in the corresponding slot of $u$ and respect the assignment $\sigma$ (i.e., if there are cells that are already filled under $\sigma$ and $\w$ uses these cells then it must have the same letters). 
We create one triplet $(\sigma', W',W'_m)$ for each such a $\w$ as follows: 
\begin{itemize}
    \item We set $\sigma' (u) := \w$ and $\sigma' (v) := \sigma (v)$ for all $v \in B_t \setminus \{u\}$.
    \item We can easily calculate the total weight, $W'$, of the words in $B_t$ where the shared letters are counted only once under the assignment $\sigma'$. 
    \item For the maximum weight $W'_m$ we know that it is increased by the same amount as $W$; so we set $W'_m = W_m + W' - W$.
\end{itemize}
Observe that we do not need to consider the intersection with
slots whose 
vertices appear in $B_t^{\downarrow}\setminus B_t$ as each node of a tree decomposition is a cut set.

Finally, we need to take in consideration that we can leave a slot empty. For this case we create a new word $\w_*$ which, we assume that, fits in all slots and $\w_*$ has weight $0$.
Because the empty word has weight $0$, $W'$ and $W'_{m}$ are identical to $W$ and $W_{m}$ so for each triplet of the child node, we only need to extend 
$\sigma$ by assigning $\w_*$ to $u$. In the case we assign the empty word somewhere we will consider that the cells of this slot are empty unless another word $\w \neq \w_*$ uses them.

For the Forget nodes we need to restrict the assignments of the child node to the vertex set of the Forget node, as it has been reduced by one vertex (the forgotten vertex), and reduce the weight $W$ (which we can calculate easily). The maximum weight is not changed by the deletion. 

However, if we restrict the assignments we may end up with several triplets $(\sigma,W,W_m)$ with identical assignments $\sigma$. In that case we are keeping only the triplet with maximum $W_m$. Observe that we are allowed to keep only triplets with the maximum $W_m$ because each node of a tree decomposition is a cut set so the same holds for the Forget nodes. Specifically, the vertices that appear in the nodes higher than a Forget node $B_t$ of the tree decomposition do not have edges incident to vertices in $B_t^{\downarrow} \setminus B_t$ so we only care for the assignment in $B_t$. 

Finally, we need to consider the Join nodes. Each Join node has exactly two children. For each possible assignment $\sigma$ on the vertices of this Join node, 
we create a triplet iff this $\sigma$ appears in a triplet of both children of the Join node. 

Because 
$W$ is related only to the assignment $\sigma$, it is easy to see that it will be the same as in the children of the Join node. So we need to find the maximum weight $W_m$. Observe that between the vertices that appear in the subtrees of two children of a Join node there are no edges except those incident to the vertices of the Join node. Therefore, we can calculate the maximum weight $W_m$ as follows: first we consider the maximum weight of each child of the Join node reduced by $W$, we add all these weights and, in the end, we add again the $W$. It is easy to see that this way we consider the weight of the cells appearing in each subtree without those of the slots of the Join node and we add the weight of the words assigned to the vertices of the Join node in the end. 

For the running time we need to observe that the number of nodes of a nice tree decomposition is $O( \tw \cdot n)$ and all the other calculations are polynomial in $n+m$ so we only need to consider the different assignments for each node. Because for each vertex we have $|\D|+1$ choices, 
the number of 
different assignments for a node is at most $(|\D|+1)^{\tw+1}$. 
\end{proof}



It seems that the algorithm we propose for {\CPD} is essentially optimal,
even if we consider a much more restricted case.

\begin{theorem}\label{thm:byrows} {\CPD} 
with word reuse is W[1]-hard parameterized
by the number of horizontal slots of the grid, even for alphabets with two letters.  Furthermore, under the ETH, no algorithm can solve this
problem in time $m^{o(k)}$, where $k$ is the number of horizontal slots. \end{theorem}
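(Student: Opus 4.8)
The plan is to reduce from \textsc{$k$-Multicolored Independent Set} (equivalently $k$-\textsc{Independent Set}, which by \cite{CyganFKLMPPS15} is W[1]-hard and has no $n^{o(k)}$ algorithm under the ETH). Given a graph $G$ on vertex set partitioned into $k$ color classes $V_1,\dots,V_k$ of size at most $N$ each, I want to build a crossword instance whose grid graph has $O(k)$ horizontal slots and whose dictionary has $m = N^{O(1)}$ words, so that the grid can be filled (with word reuse) iff $G$ has an independent set using exactly one vertex per color class. The running time lower bound then follows: an $m^{o(k)}$ algorithm would give an $N^{o(k)}$ algorithm for the source problem, contradicting the ETH, and the same construction gives W[1]-hardness parameterized by the number of horizontal slots.

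The key design idea is to encode ``which vertex of $V_i$ is chosen'' in the contents of the $i$-th horizontal slot, and to use the (many) vertical slots to check pairwise non-adjacency. Concretely, I would use $k$ long horizontal slots $H_1,\dots,H_k$, one per color class. Over the binary alphabet, a choice of a vertex $v \in V_i$ is represented by writing a fixed-length binary codeword for $v$ into a designated block of cells of $H_i$; the dictionary for $H_i$ consists of exactly those words that spell a legal codeword of some $v\in V_i$ in the relevant positions (and have arbitrary but dictionary-controlled content elsewhere), so filling $H_i$ forces a well-defined choice. To verify that the chosen $v\in V_i$ and $w\in V_j$ are non-adjacent, I place a vertical slot $C_{i,j}$ crossing both $H_i$ and $H_j$; this vertical slot reads off (a few bits of) the codeword of the chosen vertex in $V_i$ at its intersection with $H_i$ and (a few bits of) the codeword of the chosen vertex in $V_j$ at its intersection with $H_j$, and the dictionary for $C_{i,j}$ is exactly the set of words encoding non-edges, i.e. pairs $(v,w)$ with $v\in V_i$, $w\in V_j$, $vw\notin E(G)$. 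Since a vertical slot intersects each horizontal slot in a single cell, a single vertical slot only sees one bit of each codeword; the standard fix is to make each bit of the codeword of $v$ be checked by its own vertical slot, i.e. replace $C_{i,j}$ by a family of $O(\log N)$ vertical slots, one per bit-position, each carrying the dictionary of non-edges projected onto that coordinate — or, more cleanly, blow up the grid so each ``bit'' occupies its own row-block and argue consistency via shared cells. Either way the number of \emph{vertical} slots is polynomial in $k$ and $\log N$, while the number of \emph{horizontal} slots stays $O(k)$, which is all the theorem needs.

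The main obstacle, and the part requiring the most care, is exactly this consistency enforcement: I must guarantee that the bits the vertical checking slots read out of $H_i$ really do jointly encode a single vertex $v\in V_i$, rather than a ``Frankenstein'' combination of bits from different vertices. The clean way is to make the codewords of distinct vertices in $V_i$ differ in a controlled, self-identifying way — for instance prepend to each codeword a unique ``name block'' and have every checking slot's dictionary also carry that name block, so that the name block read by slot $C_{i,j}$ must match the one actually written in $H_i$; alternatively, use an error-correcting-code style gadget so that any inconsistent bit-pattern is simply not extendable. A second, more mundane obstacle is keeping the dictionary polynomial: the dictionaries for the long horizontal slots $H_i$ must not enumerate all $2^{|H_i|}$ strings, so I would pad the non-essential cells of $H_i$ with a fixed filler pattern and include in the dictionary only the $|V_i| \le N$ relevant words, and likewise ensure every vertical slot has length bounded so its dictionary is polynomial. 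Finally I would write out the two directions of correctness: (forward) an independent set yields a filling by writing the chosen codewords and the matching non-edge words; (backward) any filling forces one vertex per class by the name-block argument and forces all chosen pairs to be non-edges by the vertical-slot dictionaries, hence the chosen set is independent. Combining with the ETH-hardness of $k$-\textsc{Independent Set} gives both the W[1]-hardness and the $m^{o(k)}$ lower bound, since $k$ here equals the number of horizontal slots.
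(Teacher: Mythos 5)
Your high-level plan (reduce from $k$-\textsc{Independent Set} with one horizontal slot per chosen vertex and vertical slots checking pairwise non-adjacency) matches the paper's, but the encoding you propose does not survive the obstacle you yourself flag, and the fixes you sketch do not close the gap. The problem is structural: a vertical slot intersects each horizontal slot in exactly \emph{one} cell, and two vertical slots never share a cell, so there is no mechanism in the grid by which several ``bit-checking'' vertical slots can be forced to read the \emph{same} vertex out of $H_i$. Your name-block fix fails for precisely this reason --- a name block occupies several cells of the row $H_i$, and no single vertical slot can see more than one of them; distributing the block over several vertical slots just reproduces the Frankenstein problem one level down, since those slots cannot communicate. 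Likewise, the dictionary of a checking slot $C_{i,j}$ effectively constrains only the two intersection cells (the remaining cells of $C_{i,j}$ are not shared with anything), and the projection of the non-edge relation of $G$ onto a single pair of bit positions of two binary codewords is in general all of $\{0,1\}^2$, so these slots enforce nothing.

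The paper resolves this by discarding compact binary codewords altogether and encoding each vertex $v$ by its \emph{edge-incidence vector}: the word for $v$ has length $2|E|-1$ and carries a $1$ at position $2i-1$ exactly when $e_i$ is incident on $v$, and $0$ elsewhere. With this unary-style encoding the independence condition decomposes into one constraint per edge $e_i$, namely ``the $i$-th column contains at most one $1$,'' and this constraint involves only a single cell of each horizontal slot --- exactly what a vertical slot can check. The vertical dictionary is then just the $k+1$ words of length $2k-1$ with at most one $1$ (in an odd position). This single idea also disposes of two side issues you would otherwise have to handle separately: placing the same vertex twice is excluded because any vertex of positive degree would put two $1$s in some column, and the mild assumption $|E|\neq k$ forces vertex-words into horizontal slots and checking-words into vertical slots. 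So the missing ingredient in your write-up is not more careful consistency gadgetry but a change of encoding that makes every check a single-cell-per-row check.
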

\begin{proof}

We perform a reduction from $k$-\textsc{Independent Set}, where we are given a
graph $G=(V,E)$ with $|V|$ vertices and $|E|$ edges and are looking for an
independent set of size $k$.  This problem is well-known to be W[1]-hard and
not solvable in $|V|^{o(k)}$ time under the ETH \cite{CyganFKLMPPS15}. We assume
without loss of generality that $|E|\neq k$.  Furthermore, we can safely assume
that $G$ has no isolated vertices.

We first describe the grid of our construction which fits within an area of
$2k-1$ lines and $2|E|-1$ columns.  We construct:

\begin{enumerate}

\item $k$ horizontal slots, each of length $2|E|-1$ (so each of these slots is as
long horizontally as the whole grid).  We place these slots in the unique way
so that no two of these slots are in consecutive lines. We number these
horizontal slots $1,\ldots,k$ from top to bottom.

\item $|E|$ vertical slots, each of length $2k-1$ (so each of these slots is long
enough to cover the grid top to bottom). We place these slots in the unique way
so that no two of them are in consecutive columns. We number them $1,\ldots,|E|$
from left to right. 

\end{enumerate}

Before we describe the dictionary, let us give some intuition about the grid.
The main idea is that in the $k$ horizontal slots we will place $k$ words that
signify which vertices we selected from the original graph. Each vertical slot
represents an edge of $E$, and we will be able to place a word in it if and
only if we have not placed words representing two of its endpoints in the
horizontal slots. 

Our alphabet has two letters, say $0,1$. In the remainder, we assume that the edges of the
original graph are numbered, that is, $E=\{e_1,\ldots,e_{|E|}\}$. The dictionary is
as follows:

\begin{enumerate}

\item For each vertex $v$ we construct a word of length $2|E|-1$. For each
$i\in\{1,\ldots,|E|\}$, if the edge $e_i$ is incident on $v$, then the letter at
position $2i-1$ of the word representing $v$ is $1$. All other letters of the
word representing $v$ are $0$. Observe that this means that if $e_i$ is
incident on $v$ and we place the word representing $v$ on a horizontal slot,
the letter $i$ will appear on the $i$-th vertical slot. Furthermore, the word
representing $v$ has a number of $1$s equal to the degree of $v$.

\item We construct $k+1$ words of length $2k-1$. One of them is simply
$0^{2k-1}$. The remaining are $0^{2j-2}10^{2k-2j}$, for $j\in\{1,\ldots,k\}$,
that is, the words formed by placing a $1$ in an odd-numbered position and $0$s
everywhere else. Observe that if we place one of these $k$ words on a vertical
slot, a $1$ will be placed on exactly one horizontal slot.

\end{enumerate}

This completes the construction. We now observe that the $k$ horizontal slots
correspond to a vertex cover of the grid-graph. Therefore, if the reduction
preserves the answer, the hardness results for $k$-\textsc{Independent Set}
transfer to our problem, since we preserve the value of the parameter.

We claim that if there exists an independent set of size $k$ in $G$, then it is
possible to fill the grid. Indeed, take such a set $S$ and for each $v\in S$ we
place the word representing $v$ in a horizontal slot. Consider the $i$-th
vertical slot. We will place in this slot one of the $k+1$ words of length
$2k-1$. We claim that the vertical slot at this moment contains the letter $1$
at most once, and if $1$ appears it must be at an odd position (since these are
the positions shared with the horizontal slots). If this is true, clearly there
is a word we can place. To see that the claim is true, recall that since $S$ is
an independent set of $k$ distinct vertices, there exists at most one vertex in
$S$ incident on $e_i$.

For the converse direction, recall that $|E|\neq k$.  This implies that if there
is a way to fill out the whole grid, then words representing vertices must go
into horizontal slots and words of length $2k-1$ must go into vertical slots.
By looking at the words that have been placed in the horizontal slots we obtain
a collection of $k$ (not necessarily distinct) vertices of $G$. We will prove
that these vertices must actually be an independent set of size exactly $k$. To
see this, consider the $i$-th vertical slot. If our collection of vertices
contained two vertices incident on $e_i$, it would have been impossible to fill
out the $i$-th vertical slot, since we would need a word with two $1$s.
Observe that the same argument rules out the possibility that our collection
contains the same vertex $v$ twice, as the column corresponding to any edge
$e_i$ incident on $v$ would have been impossible to fill.  \end{proof}

\subsection{No Word Reuse}

If a word cannot be reused, then {\CPD} looks more challenging. Indeed, in the following theorem we prove
that if reusing words is not allowed, then the problem becomes NP-hard even if
the grid graph is acyclic and the alphabet size is $2$. (Note that if the
alphabet size is $1$, the problem is trivial, independent of the structure of
the graph).

\begin{theorem} \label{thm:forest:2letters} {\CPD} 
is NP-hard, even for instances
where all of the following restrictions apply: (i) the grid graph is a union of stars
(ii) the alphabet contains only two letters (iii) words cannot be reused. 
\end{theorem}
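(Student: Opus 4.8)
The plan is to reduce from a suitable NP-hard problem whose structure already looks ``star-like'', namely a variant of exact set cover or, even better, \textsc{3-Dimensional Matching} / \textsc{3-Partition}-flavoured problems; but the cleanest choice is probably to reduce from \textsc{3-SAT} or from a constraint-satisfaction problem in which each variable appears in a bounded-size neighbourhood. The key design principle is that a union of stars means the grid graph has one ``centre'' slot per star, each joined to several ``leaf'' slots, and crucially \emph{no two centres are adjacent and no two leaves of different stars interact}. So all the global coordination of the reduction must be funnelled through the fact that words cannot be reused: the only way two different stars can talk to each other is by competing for the same words in the dictionary. This is exactly the lever that Proposition~\ref{prefilled:shared:cells} and Theorem~\ref{thm:byrows} do \emph{not} exploit, and it is what makes no-reuse genuinely harder.

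Concretely, I would take each star to encode one ``choice gadget''. The centre slot of a star is a long slot whose shared cells (one per leaf) spell out, in binary, some combinatorial object (e.g.\ which literal of a clause is satisfied, or the value of a block of variables). The dictionary contains, for that centre, exactly the words corresponding to the legal choices. Each leaf slot is short (it is only constrained at its single shared cell with the centre plus possibly being forced to take one of very few words), and the leaf's available words are \emph{shared across stars}: a given ``token'' word can be placed in a leaf of star $A$ or a leaf of star $B$ but not both. Thus one designs the instance so that the whole grid can be filled if and only if the choices made at the centres are globally consistent --- e.g.\ a truth assignment satisfying all clauses, where the no-reuse constraint enforces that a variable is not simultaneously set to true by one clause-gadget and to false by another. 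A natural source problem to make this bookkeeping clean is something like \textsc{1-in-3 SAT} or a balanced version of \textsc{Hitting Set}; one can also imagine reducing from \textsc{Hamiltonian Path} or a scheduling problem, but the literal-selection picture with SAT seems the most transparent.

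The two technical constraints to watch are the \emph{binary alphabet} and keeping the graph an honest \emph{union of stars}. Binary alphabet forces all ``labels'' to be encoded in unary-ish blocks of $0$s and $1$s with the $1$-positions carrying the information, exactly in the style already used in the proof of Theorem~\ref{thm:byrows} (words like $0^{2j-2}10^{2k-2j}$); I expect to reuse that trick so that distinct choices produce distinct patterns at the shared cells, and to pad with $0$s so that the non-shared cells never cause an unintended collision with the token words living in the dictionary. Keeping the graph a union of stars means I must make sure a leaf slot intersects only its own centre --- so leaf slots of different stars must be laid out in the plane far apart, and token words are ``reused'' only in the sense of the dictionary, never geometrically.

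The step I expect to be the main obstacle is the \emph{converse direction}: showing that every way of completely filling the grid corresponds to a legal solution of the source instance. The forward direction (a solution of the source problem yields a filling) is essentially a matter of exhibiting the assignment. For the converse I must argue that words cannot be placed ``in the wrong slot'': a centre word cannot end up in some leaf and vice versa (handled, as in Theorem~\ref{thm:byrows}, by making lengths incomparable, so a word only fits in the slot it was designed for), and more delicately that the no-reuse constraint, applied to the token words, really does force global consistency rather than merely local feasibility. The quantitative accounting here --- choosing exactly how many token words of each type to put in the dictionary so that a feasible packing exists iff the choices agree --- is where the reduction has to be tuned carefully, and it is the part of the proof I would write out in full detail.
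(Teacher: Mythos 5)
You have correctly identified the one idea that carries this theorem: in a union of stars the only channel of communication between different stars is the no-reuse constraint, so the reduction must make distinct stars compete for a carefully counted pool of dictionary words, with slot lengths forcing each word into the kind of slot it was designed for. That is exactly the architecture of the paper's proof. However, your proposal explicitly defers "the quantitative accounting" of the token words, and that accounting \emph{is} the proof --- without it there is no argument. Moreover, your preferred instantiation (reducing from \textsc{3-SAT} or \textsc{1-in-3 SAT}) is the harder road here: to enforce that a variable is set consistently across the several stars in which it occurs, you would need a cyclic or pairwise coupling between those stars, which with a binary alphabet and star-shaped gadgets is exactly the delicate construction the paper reserves for the (three-letter, matching) case of Theorem~\ref{thm:matching:3letters}. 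For the two-letter, union-of-stars case the paper instead reduces from \textsc{3-Partition} with distinct integers, where the counting is immediate rather than something to be "tuned": for each integer $x_i$ there are exactly $n$ words of length $x_i$ (one beginning with the marked letter, $n-1$ beginning with the blank letter) and exactly $n$ leaf slots of length $x_i$ (one per "interesting" centre, attached at position $2i-1$); the centre words, one per triple summing to $B$, carry the marked letter precisely at the positions of their three members. A complete filling then uses all $n$ words of length $x_i$, and since exactly one of them is marked, exactly one selected triple contains $x_i$ --- which is the entire converse direction.

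Two further details your plan glosses over but which are load-bearing: the integers must be \emph{distinct} (so that length alone identifies which $x_i$ a leaf word encodes --- this is why the paper cites the Hulett et al.\ strengthening of \textsc{3-Partition}), and they must be large (the paper assumes $x_i>6n$) so that no leaf word fits a centre slot or vice versa; your "incomparable lengths" remark points in the right direction but needs this explicit normalization. Also note that the paper pads the construction with $f-n$ non-interesting centre slots so that \emph{every} triple word gets placed, which removes any freedom in the converse direction about which centre words are "used up". As written, your proposal is a correct high-level description of the paper's strategy but not yet a proof: the source problem, the exact multiplicities, and the length normalization are all left open, and they are where the theorem actually gets proved.
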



\begin{proof}

We show a reduction from \textsc{3-Partition}. Recall that in
\textsc{3-Partition} we are given a collection of $3n$ distinct positive
integers $x_1,\ldots,x_{3n}$ and are asked if it is possible to partition these
integers into $n$ sets of three integers (triples), such that all triples have
the same sum. This problem has long been known to be strongly NP-hard
\cite{GJ79} and NP-hardness when the integers are distinct was shown by Hulett
et al. \cite{HulettWW08}.  We can assume that $\sum_{i=1}^{3n} x_i = nB$ and
that if a partition exists each triple has sum $B$. Furthermore, we can assume
without loss of generality that $x_i>6n$ for all $i\in \{1,\ldots,3n\}$
(otherwise, we can simply add $6n$ to all numbers and adjust $B$ accordingly
without changing the answer).

Given an instance of \textsc{3-Partition} as above, we construct a crossword
instance as follows. First, the alphabet only contains two letters, say the
letters $*$ and $!$. To construct our dictionary we
do the following:


\begin{enumerate}

\item For each $i\in\{1,\ldots,3n\}$, we add to the dictionary one word of
length $x_i$ that begins with $!$ and $n-1$ words of length $x_i$ that begin
with $*$. The remaining letters of these words are chosen in an arbitrary way
so that all words remain distinct.

\item For each $i,j,k\in\{1,\ldots,3n\}$ with $i<j<k$ we check if
$x_i+x_j+x_k=B$. If this is the case, we add to the dictionary the word
$*^{2i-2}!*^{2j-2i-1}!*^{2k-2j-1}!*^{6n-2k}$. In other words, we constructed a
word that has $*$ everywhere except in positions $2i-1, 2j-1$, and $2k-1$. The
length of this word is $6n-1$. Let $f$ be the number of words added to the
dictionary in this step. We have $f \le \genfrac(){0pt}{2}{3n}{3} = O(n^3)$.
\end{enumerate}

We now also need to specify our grid. We first construct $f$ horizontal slots,
each of length $6n-1$. Among these $f$ slots, we select $n$, which we call the
``interesting'' horizontal slots. For each interesting horizontal slot, we
construct $3n$ vertical slots, such that the $i$-th of these slots has length
$x_i$ and its first cell is the cell in position $2i-1$ of the interesting
horizontal slot. This completes the construction, which can clearly be carried
out in polynomial time. Observe that the first two promised restrictions are
satisfied as we have an alphabet with two letters and each
vertical slot intersects at most one horizontal slot (so the grid graph is a
union of stars).


We claim that if there exists a partition of the original instance, then we can
place all the words of the dictionary on the grid. Indeed, for each
$i,j,k\in\{1,\ldots,3n\}$ such that $\{x_i,x_j,x_k\}$ is one of the triples of
the partition, we have constructed a word of length $6n-1$ corresponding to the
triple $(i,j,k)$, because $x_i+x_j+x_k=B$. We place each of these $n$ words on
an interesting horizontal slot and we place the remaining words of length
$6n-1$ on the non-interesting horizontal slots. Now, for every
$i\in\{1,\ldots,3n\}$ we have constructed $n$ words, one starting with $!$ and
$n-1$ starting with $*$. We observe that among the interesting horizontal
slots, there is one that contains the letter $!$ at position $2i-1$ (the one
corresponding to the triple containing $x_i$ in the partition) and $n-1$
containing the letter $*$ at position $2i-1$. By construction, the vertical
slots that begin in these positions have length $x_i$. Therefore, we can place
all $n$ words corresponding to $x_i$ on these vertical slots. Proceeding in
this way we fill the whole grid, fulfilling the third condition.

For the converse direction, suppose that there is a way to fill the whole grid.
Then, vertical slots must contain words that were constructed in the second
step and represent integers $x_i$, while horizontal slots must contain words
constructed in the first step (this is a consequence of the fact that $x_i>6n$
for all $i\in\{1,\ldots,3n\}$). We consider the $n$ interesting horizontal
slots. Each such slot contains a word that represents a triple $(i,j,k)$ with
$x_i+x_j+x_k=B$. We therefore collect these $n$ triples and attempt to
construct a partition from them. To do this, we must prove that each $x_i$ must
belong to exactly one of these triples. However, recall that we have exactly
$n$ words of length $x_i$ (since all integers of our instance are distinct) and
exactly $n$ vertical slots of this length. We conclude that exactly one 
 vertical slot must have $!$ as its first letter, therefore $x_i$ appears
in exactly one triple and we have a proper partition.  \end{proof}

Actually, the problem remains NP-hard even in the case where the grid graph is a matching and 
the alphabet contains three letters. 
This is proved for grid graphs composed of $\T$s, where a $\T$ is a horizontal slot solely intersected by the first cell of a vertical slot. 

\begin{theorem} \label{thm:matching:3letters}  {\CPD} is NP-hard, even for instances
where all of the following restrictions apply: (i) each word can be used only once (ii) the grid is consisted only by $\T$s and
(iii) the alphabet contains only three letters.
\end{theorem}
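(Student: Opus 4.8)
The plan is to reduce again from \textsc{3-Partition}, reusing the skeleton of the proof of \cref{thm:forest:2letters} but modifying the construction so that each horizontal slot (together with the single vertical slot that meets its first cell) forms an isolated $\T$, rather than a star. First I would recall the setup: distinct integers $x_1,\ldots,x_{3n}$ with $x_i>6n$, total $nB$, and the question of partitioning them into $n$ triples of sum $B$. The issue with the previous construction is that an ``interesting'' horizontal slot was intersected by $3n$ vertical slots simultaneously; to obtain a matching we must instead make each horizontal slot be touched by \emph{exactly one} vertical slot, which means the information about which triple a horizontal slot encodes has to be transmitted through a single shared cell, and the combinatorial bookkeeping that forced each $x_i$ to appear in exactly one triple must now be enforced by a counting argument on words rather than by the geometry of a star.

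The key idea is to use the third letter. I would keep two ``filler'' letters $*$ and $!$ playing roughly the role they did before, and introduce a third letter, say $\#$, used to tag the unique cell where each vertical slot meets its horizontal slot. Concretely: for every triple $(i,j,k)$ with $x_i+x_j+x_k=B$ I would create a horizontal slot of length $6n-1$ and, crucially, a \emph{distinct} word for each such slot which carries in its single shared cell a symbol that encodes which of the three coordinates $i,j,k$ the attached vertical slot is responsible for. For each $x_i$ we still have exactly $n$ vertical slots of length $x_i$ (so that distinctness of the $x_i$ forces a bijection between length-$x_i$ vertical slots and the $n$ occurrences of $x_i$ across the chosen triples), and exactly one of the corresponding words begins with the ``selected'' marker while the other $n-1$ begin with the ``unselected'' marker. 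The shared-cell letter must match on both sides, so a horizontal-slot word encoding that its partner handles coordinate $i$ can only be completed by a length-$x_i$ vertical word; an argument on the lengths $x_i>6n$ again pins horizontal words to horizontal slots and vertical words to vertical slots, so no word can masquerade as the wrong type.

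The forward direction is then routine: given a valid partition, place for each chosen triple its dedicated horizontal word, fill the remaining non-interesting horizontal slots with the leftover length-$(6n-1)$ words, and for each $x_i$ place its $n$ length-$x_i$ words — the one beginning with the selected marker going into the vertical slot attached to the triple that actually contains $x_i$, the others distributed arbitrarily among the remaining length-$x_i$ vertical slots; consistency at shared cells holds by construction. For the converse, a completely filled grid yields $n$ interesting horizontal words, hence $n$ triples summing to $B$; since there are $3n$ distinct values, $3n$ total occurrences across the $n$ triples would have to be matched by exactly $3n$ vertical slots, and a pigeonhole/counting argument (exactly $n$ vertical slots of each length $x_i$, and the ``selected''-marker words being available only once per $x_i$) forces each $x_i$ to appear in exactly one chosen triple, giving a genuine \textsc{3-Partition} solution. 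I expect the main obstacle to be engineering the dictionary so that: (a) every horizontal word is distinct and uniquely identifies a (triple, coordinate-role) pair so that word-reuse-freeness does the counting work that the star geometry did before; (b) no horizontal word accidentally fits a vertical slot or vice versa, which is where the $x_i>6n$ padding and careful choice of which letters may appear in which positions must be checked; and (c) verifying that three letters genuinely suffice — i.e.\ that $*$, $!$, $\#$ can simultaneously serve as padding, as the selected/unselected marker, and as the shared-cell tag without collisions. Once the dictionary is laid out correctly, the equivalence proof mirrors that of \cref{thm:forest:2letters} almost verbatim.
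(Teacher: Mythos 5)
There is a genuine gap, and it sits exactly at the point your sketch waves towards but does not resolve. In the star construction of \cref{thm:forest:2letters}, placing a triple-word in an interesting horizontal slot simultaneously constrains all $3n$ attached vertical slots: it forces the letter $!$ at the three positions of its triple and $*$ everywhere else, and it is this \emph{simultaneous} consumption of the three unique $!$-words of lengths $x_i,x_j,x_k$ that makes the counting argument certify a partition. Once the grid graph is a matching of $\T$s, each horizontal slot has a single shared cell and can therefore transmit only one letter, i.e.\ it can ``claim'' at most one of the three integers of its triple. If you build one $\T$ per triple, the other two coordinates of a chosen triple are consumed by nothing, and a filled grid no longer certifies that every $x_i$ lies in exactly one triple; your pigeonhole step (``exactly $n$ vertical slots of each length $x_i$ and one selected-marker word per $x_i$'') does not go through because the selected-marker word of length $x_j$ need not be used just because a triple containing $x_j$ was ``chosen'' at some other, geometrically unrelated component. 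If instead you build three $\T$s per triple, one per coordinate, you must synchronize them — all three in ``selected'' mode or none — across connected components that share no cells; the only available mechanism is word non-reuse, and your proposal gives no gadget for this. Note also the tension in your own requirements: you want all triple-words to have length $6n-1$ so that lengths separate horizontal from vertical words, but then any triple-word fits any horizontal slot of that length, so distinctness of words cannot bind a ``selected'' word to the particular component(s) of its own triple.

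The paper's proof takes a different route precisely to manufacture this synchronization. It reduces from a restricted \textsc{Exactly-1 (3,2)-SAT} in which every variable occurs at most three times, builds \emph{two} $\T$s per literal occurrence with pairwise distinct slot lengths (so each word fits only its intended slot), and uses the three letters $s_1,s_2,s_3$ to implement a cyclic ``last letter'' gadget (Property~\ref{vetical_slots_x_i}): the at most three type-1 horizontal slots of a variable can be completed by the length-matched vertical words only if they all carry the $T$-words or all carry the $F$-words, which is exactly the all-or-nothing coupling your construction is missing. That gadget crucially exploits the bound of three occurrences per variable; a \textsc{3-Partition} integer can occur in $\Theta(n^2)$ candidate triples, which is why porting the \cref{thm:forest:2letters} skeleton to a matching is not a near-verbatim adaptation. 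To repair your approach you would essentially have to reinvent such a coupling gadget, at which point you are better off switching the source problem as the paper does.
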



In order to prove this theorem we need first to define a restricted version of \textsc{Exactly-1  3-SAT}.

\begin{definition}[\textsc{Restricted Exactly 1 (3,2)-SAT}]
Assume that $\phi$ is a CNF formula where each clause has either three or two literals and each variable appears at most three times. We want to determine whether there exists a satisfying assignment so that each clause has exactly one true literal.
\end{definition}
\begin{lemma}
The \textsc{Restricted Exactly-1 (3,2)-SAT} is NP-complete.
\end{lemma}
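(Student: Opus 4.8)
The plan is to reduce from ordinary \textsc{Exactly-1 3-SAT} (also known as \textsc{One-in-Three 3-SAT}), which is well known to be NP-complete (Schaefer), and to use a standard variable-splitting gadget to enforce the "each variable appears at most three times" restriction, allowing the introduction of $2$-clauses to glue copies together. Membership in NP is immediate, so all the work is in the hardness reduction. First I would take a \textsc{Exactly-1 3-SAT} instance $\phi$ on variables $x_1,\dots,x_n$ and, for each variable $x$ that occurs $t \geq 4$ times, replace its occurrences by fresh copies $x^{(1)},\dots,x^{(t)}$. The key gadget is an \emph{equality chain}: I want to force $x^{(1)} = x^{(2)} = \dots = x^{(t)}$ using only clauses of size $2$ or $3$ in which every literal is counted with the exactly-one semantics. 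The cleanest way is the cyclic implication $x^{(1)} \Rightarrow x^{(2)} \Rightarrow \dots \Rightarrow x^{(t)} \Rightarrow x^{(1)}$; under the exactly-one rule, a $2$-clause $(a \vee b)$ means "exactly one of $a,b$ is true", which is precisely $a \neq b$, i.e. $a = \neg b$. So a $2$-clause $(\neg x^{(i)} \vee x^{(i+1)})$ enforces $x^{(i)} = x^{(i+1)}$, and a chain of such $2$-clauses around the cycle forces all copies equal. Crucially each copy $x^{(i)}$ then appears in exactly: one original clause (where it substitutes for the $i$-th occurrence of $x$) plus two of the equality $2$-clauses (one on each side in the cycle), for a total of three occurrences — exactly meeting the bound.

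The second step is bookkeeping: after splitting, each original $3$-clause of $\phi$ becomes a $3$-clause over the new copy-variables, with exactly-one semantics preserved verbatim. I would then argue the equivalence: given a satisfying exactly-one assignment of $\phi$, assign every copy $x^{(i)}$ the value of $x$; all equality $2$-clauses are satisfied in the exactly-one sense (since $x^{(i)} = x^{(i+1)}$ means, writing one literal negated, exactly one of the two is true), and the rewritten $3$-clauses inherit the exactly-one property. Conversely, a satisfying exactly-one assignment of the new instance forces all copies of each variable equal (by following the cyclic $2$-clauses), so it collapses to an assignment of $\phi$ that satisfies every original clause with exactly one true literal. The construction is clearly polynomial.

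One small wrinkle I would address carefully: the case $t \in \{2,3\}$. If a variable occurs exactly $2$ or $3$ times, it already satisfies the bound and needs no splitting — but it is cleaner (and harmless) to leave such variables untouched, so I would only split variables with $t \geq 4$. Another wrinkle is that a variable occurring exactly once is fine, and a $3$-cycle of $2$-clauses is the smallest nontrivial chain (for $t=4$ we get a $4$-cycle), so the gadget is well-defined for all $t \geq 4$; one should double-check that the cycle does not accidentally force all copies \emph{false} or create unsatisfiability — but $x^{(i)} = x^{(i+1)}$ for all $i$ around a cycle is simply satisfied by the all-equal assignments (both the all-true and all-false settings of the copies are consistent with the equality constraints, which is exactly what we want). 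The main obstacle, and the only genuinely delicate point, is verifying that the $2$-clause-as-inequality encoding of equality behaves correctly under the exactly-one semantics and that the occurrence count of each copy-variable comes out to \emph{at most} three rather than four; this is handled by using the asymmetric literal pattern $(\neg x^{(i)} \vee x^{(i+1)})$ (so each copy appears once positively and once negatively among the glue clauses) together with its single appearance in a rewritten original clause.
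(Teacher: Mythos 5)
Your proposal is correct and follows essentially the same route as the paper: a reduction from \textsc{Exactly-1 3-SAT} that splits each frequently occurring variable into fresh copies glued together by a cyclic chain of $2$-clauses $(\neg x^{(i)} \lor x^{(i+1)})$, which under the exactly-one semantics force all copies equal while keeping each copy's occurrence count at three. Your additional care about the occurrence count and the behaviour of the equality gadget is sound and matches the paper's argument.
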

\begin{proof}
We show a reduction from \textsc{Exactly-1  3-SAT} which is known to be NP-complete \cite{GJ79} (\textsc{lo4}, \textsc{one-in-three 3sat}).


Let $I = (\phi,X)$ be an instance of \textsc{Exactly-1  3-SAT} with $|X|=n$ variables and $m$ clauses. If there exists a variable $x$ with $k>3$ appearances, we replace each appearance with a fresh variable $x_i$, $i\in [k]$ and add to the formula the clauses $(\neg
x_1 \lor x_2)\land (\neg x_2\lor x_3)\ldots (\neg x_{k}\lor x_1)$. We repeat
this for all variables that appear more than three times. Let $I'=(\phi',X')$ be this new instance. 

We claim that $I = (\phi,X)$ is a yes instance of \textsc{Exactly-1  3-SAT} iff $I' = (\phi',X')$ is a yes instance of \textsc{Restricted Exactly-1  (3,2)-SAT}. 

Let $S:X\rightarrow \{T,F\}$ be a satisfying assignment for $\phi$ such that each clause of $\phi$ has exactly one true literal. It is not hard to see that 
$S':X'\rightarrow \{T,F\}$ such that $S'(x)=S(x)$ if $x \in X$ and $S'(x_i)=S(x)$ if $x_i$ replaces one appearance of $x \in X$, 
is a satisfying assignment for $\phi'$ such that each clause of $\phi'$ has exactly one true literal.

Conversely, let $S':X'\rightarrow \{T,F\}$ be a satisfying assignment for $\phi'$ such that each clause of $\phi'$ has exactly one true literal. Let $x_i$, $i \in [k]$, be the variables 
replacing 
$x$. Because we have clauses $(\neg
x_1 \lor x_2)\land (\neg x_2\lor x_3)\ldots (\neg x_{k}\lor x_1)$ we know that all the $x_i$, $i \in [k]$, must have the same value in order to guarantee 
that all of these clauses 
have exactly one true literal. Furthermore, is not hard to see that 
$S :X\rightarrow \{T,F\}$ where $S(x)=S'(x)$ if $x \in X'$ and $S(x)=S'(x_1)$ if $x_1$ 
replaces one appearance of $x$, then 
$S$ is a satisfying assignment for $\phi$ such that each clause of $\phi$ has exactly one true literal.
\end{proof}

Now, let us give a construction that we are going to use.

\noindent
\textbf{Construction.}\\
\noindent
Let $\phi$ be an instance of \textsc{Restricted Exactly 1 (3,2)-SAT} with variables $X=\{x_1,\ldots,x_n\}$ and clauses 
$C=\{c_1,\ldots,c_m\}$. We will construct an instance of the crossword problem with 
alphabet $\L=\{s_1,s_2,s_3\}$ 
where each letter has weight 1. 
The dictionary $\D$ is as follows.

Let $nl_j \in \{2,3\}$ be the number of literals in  $c_j$. For each variable $x_i$, let $a_i \le 3$ be the number of its appearances in $\phi$.  Then, we create $3a_i$ words, $\w_{i,k,T}$, $\w_{i,k,F}$ and $\w_{i,k}$, for each $k \in [a_i]$ as follows. 
\begin{itemize}
    \item $\w_{i,k,T}$ and $\w_{i,k,F}$ have length $m+n+3i+k$,
    \item the last letter of $\w_{i,k,T}$ is $s_{k}$,
    \item the last letter of $\w_{i,k,F}$ is $s_{k'}$ where $k':=k+1$ when $k<a_i$, otherwise $k':=1$, 
    \item if the $k$-th appearance of $x_i$ is positive then, $\w_{i,k,T}$ starts with $s_1$ and $\w_{i,k,F}$ starts with $s_2$,
    \item if the $k$-th appearance of $x_i$ is negative then, $\w_{i,k,T}$ starts with $s_2$ and $\w_{i,k,F}$ starts with $s_1$, 
    \item the word $\w_{i,k}$ has length $m+i+1$ and  starts with $s_k$, and 
    \item all the other letters of these words can be chosen arbitrarily.
\end{itemize}
Observe that the above process gives 
three words for each literal in $\phi$.

For each clause $c_j$, $j \in [m]$, we construct $nl_j$ distinct words $\w_j^t$, $t \in [nl_j]$ 
of length $1+j$ such that one of them starts with the letter $s_2$, the other $nl_j-1$ words start with $s_1$, and the unspecified letters can be chosen arbitrarily. 
Observe that we have enough positions in order to create $nl_j-1$ distinct words starting with $s_1$, which indicates that we can create $nl_j$ pairwise distinct words for each $c_j$.

In order to finish our construction we have to specify the grid. For each clause $c_j$ and each literal $l$ in $c_j$ we construct two pairs of slots as follows.
Let $l$ be the $k$-th appearance of variable $x_i$, $k\in [a_i]$. The first pair of slots  (type 1) consists of one horizontal slot $hSlot_{j,1}^{i,k}$  of length $m+n+3i+k$, and one vertical slot $vSlot_{j,1}^{i,k}$ of length $m+i+1$ such that, the last cell of the horizontal slot and the first cell of the vertical slot is the shared cell. 
The second pair of slots (type 2) consists of one horizontal slot $hSlot_{j,2}^{i,k}$ of length $m+n+3i+k$, and one vertical slot $vSlot_{j,2}^{i,k}$ of length $j+1$, that share their first cells. Here let us mention that the grid we constructed is consisted only by $\T$s.

Before we continue with the proof let us observe that in 
the instance of crossword puzzle we created the number of slots in the grid is equal to the number of words in the dictionary.
Furthermore, we can specify in which slots each word can be assigned by considering the size of the words and slots. 
For any $i \in [n]$ and $k \in [a_i]$ the word $\w_{i,k}$ can be assigned only to the vertical slots of the type $1$ pairs of slots. For any $j \in [m]$ and $t \in [nl_j]$ the word $\w_j^t$ can be assigned 
only to the vertical slots of the type $2$ pairs of slots. 
The rest of the words can be assigned to horizontal slots of any type.


Let us first prove the following property 
where $j(i,k)$ denotes the index of the clause where the $k$-th occurrence of $x_i$ appears. 

\begin{property} \label{vetical_slots_x_i}
For any given $i\in [n]$, 
slots $hSlot_{j(i,k),1}^{i,k}$ and $vSlot_{j(i,k),1}^{i,k}$ for $k\in [a_i]$ are all filled iff we have assigned either all the words of $\{ \w_{i,k,T}: k\in [a_i]\}$, or all the words of $\{ \w_{i,k,F} : k\in [a_i]\}$, to the slots $hSlot_{j(i,k),1}^{i,k}$, 
$k\in [a_i]$.
\end{property}

\begin{proof}
In one direction, if we have assigned to 
slots $hSlot_{j(i,k),1}^{i,k}$, $k \in [a_i]$, all the words of $\{ \w_{i,k,T} : k \in [a_i]\}$ or all the words of $\{ \w_{i,k,F} : k \in [a_i]\}$, then  
all the letters $s_{1}, \ldots, s_{a_i}$ 
appear exactly once in the end of these $a_i$ slots. Because the words of $\{ \w_{i,k}$: $k \in [a_i]\}$ start exactly with this set of letters, 
there is a unique way to assign them properly to the slots $vSlot_{j(i,k),1}^{i,k}$, $k \in [a_i]$. 

Conversely, assume that all the type 1 pairs of slots of $x_i$ are filled. Because the only words that have the same length as 
slots $vSlot_{j(i,k),1}^{i,k}$, $k \in [a_i]$,  
are the words of $\{ \w_{i,k} : k \in [a_i]\}$, we know that in the end of 
slots $hSlot_{j(i,k),1}^{i,k}$, $k \in [a_i]$, each letter of $\{s_{1}, \ldots, s_{a_i}\}$ appears exactly once.  It is not hard to see that no combination of words except $\{ \w_{i,k,T}:k \in [a_i]\}$ or $\{ \w_{i,k,F} : k \in [a_i]\}$, gives the same letters in the shared positions. \end{proof}

Now we are ready to present the proof of Theorem~\ref{thm:matching:3letters}.

\begin{proof}

We show a reduction from \textsc{Restricted Exactly 1  (3,2)-SAT}. 
We claim that $\phi$ is a yes instance of  \textsc{Restricted Exactly 1 (3,2)-SAT} iff we can fill all the slots of the grid.

Suppose $f : X\rightarrow \{T,F\}$ is a truth assignment so that each clause of $\phi$ has exactly one true literal that satisfies $\phi$. 

We are going to show a way to fill all the slots of the grid. 
Each variable $x_i$ appears in $a_i$ literals;
let $l(i,k)$, $k \in [a_i]$, be these literals and $j(i,k) \in [m]$, $k \in [a_i]$, be the indices of the clauses $c_{j(i,k)}$ that contain the corresponding literals.

For each variable $x_i$, 
fill the $3 a_i$ slots
$hSlot_{j(i,k),1}^{i,k}$, $hSlot_{j(i,k),2}^{i,k}$ and $vSlot_{j(i,k),1}^{i,k}$ for all $k\in [a_i]$ as follows. 
If $f(x_i)=T$, then:
\begin{itemize}
    \item assign $\w_{i,k,T}$ to $hSlot_{j(i,k),1}^{i,k}$ for all $k \in [a_i]$ and
    \item assign $\w_{i,k,F}$ to $hSlot_{j(i,k),2}^{i,k}$ for all $k \in [a_i]$.
\end{itemize}
\noindent
Otherwise ($f(x_i)=F$):
\begin{itemize}
    \item assign $\w_{i,k,F}$ to $hSlot_{j(i,k),1}^{i,k}$ for all $k \in [a_i]$ and
    \item assign $\w_{i,k,T}$ to $hSlot_{j(i,k),2}^{i,k}$ for all $k \in [a_i]$.
\end{itemize}
Finally, in both cases, we assign the words of $\{ \w_{i,k} : k \in [a_i]\}$ to the slots $vSlot_{j(i,k),1}^{i,k}$ for $k \in [a_i]$ in any way they fit.

In order to fill the grid completely, for each $j\in [m]$, we assign to the $nl_{j}$ slots, $vSlot_{j,2}^{i,k}$, the words $\w_{j}^{k'}$ for $k' \in [nl_j]$ in any way they fit.

It is not hard to see that we have assigned words to slots of the same length. 
It remains to prove that the words we have assigned have the same letters in the shared positions. 

First observe that for a variable $x_i$ and the slots $hSlot_{j(i,k),1}^{i,k}$, $k \in [a_i]$, we have put either 
$\{ \w_{i,k,T} : k \in [a_i]\}$ or 
$\{ \w_{i,k,F} : k \in [a_i]\}$. Therefore, we know by Property~\ref{vetical_slots_x_i} that we can use the words of $\{ \w_{i,k} : k \in [a_i]\}$ in the slots $vSlot_{j(i,k),1}^{i,k}$, $k \in [a_i]$.


In the $nl_{j}$ slots, $vSlot_{j,2}^{i,k}$, related to clause $c_j$, we have put the words $\w_{j}^{k'}$, $k'\in [nl_j]$. One of these words starts with $s_2$ and the $nl_j-1$ others start with $s_1$. We will show that the same holds for the words we have assigned in the $nl_{j}$ slots $hSlot_{j,2}^{i,k}$.


Observe that each literal $l \in c_j$ can be described by a unique triplet $(j,i,k)$ where $j\in [m]$ is the index of the clause, 
$i\in [n]$ is the index of the variable $x_i$ on which $l$ is built, and $k\in [a_i]$ is the number of times that $x_i$ has appeared in $\phi$ until now. We claim that if the literal $l$ described by $(j,i,k)$ satisfies $c_j$, then the word 
assigned to $hSlot_{j,2}^{i,k}$ starts with $s_2$, otherwise it starts with $s_1$. 

If 
$l$ satisfies 
$c_j$, then either $l=x_i$ and $f(x_i)=T$ or $l=\neg x_i$ and $f(x_i)=F$. If $l=x_i$ (resp., $l= \neg x_i$), then we have assigned 
$\w_{i,k,F}$ (resp., $\w_{i,k,T}$) to $hSlot_{j,2}^{i,k}$ which starts with $s_2$ because $f(x_i)=T$ (resp., $f(x_i)=F$). If $l$ does not satisfy 
$c_j$, then we used 
$\w_{i,k,T}$ (resp., $\w_{i,k,F}$) which starts with $s_1$. 

Finally, because we assumed that each clause is satisfied by exactly one literal, we know that one of the clause words starts with $s_2$ and the other $nl_j-1$ clause words start with $s_1$.

Conversely, we claim that if we can fill the whole grid, then we can construct a truth assignment $f:X\rightarrow \{T,F\}$ such that each clause of $\phi$ has exactly one true literal. 
Furthermore, one such 
assignment is the following:
\begin{align} \label{assignmentproof3letters}
    f(x_i)=\begin{cases} T, \text{ if } \w_{i,1,T} \text{ is assigned to } hSlot_{j(i,1),1}^{i,1},\\
    F, \text{otherwise}.
    \end{cases}
\end{align}

We first prove the following claim.

\begin{claim}
Let 
$l$ be the literal of a clause $c_j$ corresponding to the $k$-th appearance of some variable $x_i$. 
$l$ is true under the truth 
assignment 
(\ref{assignmentproof3letters}) 
iff the word in 
$hSlots_{j,2}^{i,k}$ starts with 
$s_2$.
\end{claim}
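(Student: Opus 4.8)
The plan is to unwind the definitions of the two words involved and track which letter sits in the shared cell of the type-2 pair $hSlot_{j,2}^{i,k}$, $vSlot_{j,2}^{i,k}$. Recall that the type-2 pair shares its \emph{first} cells, so the first letter of the word in $hSlot_{j,2}^{i,k}$ equals the first letter of the word in $vSlot_{j,2}^{i,k}$; the latter is one of the clause words $\w_j^t$, so the letter is either $s_1$ or $s_2$. Thus the claim is really about the first letter of whatever word is placed in $hSlot_{j,2}^{i,k}$. By the length argument recorded just before Property~\ref{vetical_slots_x_i}, the only words that fit in $hSlot_{j,2}^{i,k}$ (length $m+n+3i+k$) are words of the form $\w_{i,k,T}$ or $\w_{i,k,F}$, and moreover the horizontal slots $hSlot_{j(i,k),1}^{i,k}$ and $hSlot_{j(i,k),2}^{i,k}$ associated with the $k$-th appearance of $x_i$ are the only two slots of that length, so exactly one of $\w_{i,k,T}$, $\w_{i,k,F}$ goes to $hSlot_{j,2}^{i,k}$ and the other to $hSlot_{j(i,k),1}^{i,k}$.

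The key structural step is to show that these assignments are \emph{synchronized across all occurrences of $x_i$}: either every $hSlot_{j(i,k),1}^{i,k}$ ($k\in[a_i]$) gets a ``$T$''-word and every $hSlot_{j(i,k),2}^{i,k}$ gets an ``$F$''-word, or vice versa. This follows from Property~\ref{vetical_slots_x_i} applied to the type-1 pairs: since the whole grid is filled, in particular all slots $vSlot_{j(i,k),1}^{i,k}$ are filled, hence by that property the slots $hSlot_{j(i,k),1}^{i,k}$ collectively carry either all of $\{\w_{i,k,T}:k\in[a_i]\}$ or all of $\{\w_{i,k,F}:k\in[a_i]\}$. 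This forces the complementary family onto the $hSlot_{j(i,k),2}^{i,k}$ slots. Consequently the definition in~(\ref{assignmentproof3letters}), which only inspects $hSlot_{j(i,1),1}^{i,1}$, correctly captures a global ``$T$'' vs.\ ``$F$'' choice for $x_i$: $f(x_i)=T$ iff every $hSlot_{j(i,k),1}^{i,k}$ holds $\w_{i,k,T}$, iff every $hSlot_{j(i,k),2}^{i,k}$ holds $\w_{i,k,F}$.

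It then remains a routine case analysis on the polarity of the $k$-th occurrence $l$ of $x_i$. Suppose $f(x_i)=T$. If $l=x_i$ (positive occurrence), then $l$ is true; the slot $hSlot_{j,2}^{i,k}$ (with $j=j(i,k)$) holds $\w_{i,k,F}$, and by the construction a positive occurrence has $\w_{i,k,F}$ starting with $s_2$ — so the word in $hSlot_{j,2}^{i,k}$ starts with $s_2$, as claimed. If $l=\neg x_i$ (negative occurrence), then $l$ is false; $hSlot_{j,2}^{i,k}$ holds $\w_{i,k,F}$, which for a negative occurrence starts with $s_1$ — matching ``false $\Leftrightarrow$ starts with $s_1$''. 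The case $f(x_i)=F$ is symmetric, with the roles of $\w_{i,k,T}$ and $\w_{i,k,F}$ and of the two slot types exchanged; in each of the four sub-cases the starting letter comes out as $s_2$ precisely when $l$ is true. This establishes the claim in both directions.

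The main obstacle is the synchronization step: one must be careful that filling the grid does not allow a ``mixed'' assignment where some occurrences of $x_i$ use $T$-words on type-1 slots and others use $F$-words there. This is exactly what Property~\ref{vetical_slots_x_i} rules out (via the rigidity of the $\w_{i,k}$ words on the $vSlot_{j(i,k),1}^{i,k}$ slots), so the proof hinges on invoking it correctly; everything after that is bookkeeping on the polarity conventions fixed in the construction.
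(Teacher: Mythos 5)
Your proposal is correct and follows essentially the same route as the paper's proof: the length argument pins the occupant of $hSlot_{j,2}^{i,k}$ to one of $\w_{i,k,T},\w_{i,k,F}$ (forcing the other into $hSlot_{j(i,k),1}^{i,k}$), Property~\ref{vetical_slots_x_i} synchronizes the choice across all occurrences of $x_i$ so that the definition of $f(x_i)$ via the first occurrence applies, and the final case analysis on the occurrence's polarity matches the paper's. The only cosmetic difference is that you organize the four cases by the value of $f(x_i)$ rather than by which word starts with $s_2$.
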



\begin{claimproof} Due to its length, 
$hSlots_{j,2}^{i,k}$ 
receives 
either $\w_{i,k,T}$ or $\w_{i,k,F}$, and one of these words starts with $s_2$ whereas the other starts with $s_1$. Therefore, we have two cases. In the first case $\w_{i,k,F}$ starts with $s_2$, then $\w_{i,k,T}$ starts with $s_1$ and $l=x_i$.  In the second case, $\w_{i,k,T}$ starts with $s_2$, $\w_{i,k,F}$ starts with $s_1$ and $l=\neg x_i$.

Assume that $\w_{i,k,F}$ (resp., $\w_{i,k,T}$) starts with $s_2$. By construction, we have that $l=x_i$ (resp., $l=\neg x_i$).

If  $\w_{i,k,F}$ (resp., $\w_{i,k,T}$) is assigned to 
$hSlots_{j,2}^{i,k}$, then  
$\w_{i,k,T}$ (resp., $\w_{i,k,F}$) is assigned to 
$hSlots_{j,1}^{i,k}$. By Property~\ref{vetical_slots_x_i} we know that 
$hSlots_{j,1}^{i,1}$ must contain 
$\w_{i,1,T}$ (resp., $\w_{i,1,F}$) so $f(x_i)= T$ (resp., $f(x_i)= F$). So, if  
$\w_{i,k,F}$ (resp., $\w_{i,k,T}$) is assigned to 
$hSlots_{j,2}^{i,k}$, then we know that $f(x_i)= T$ (resp., $f(x_i)= F$) and  $l=x_i$ (resp., $l=\neg x_i$) which means that $l$ must be true under the truth 
assignment
(\ref{assignmentproof3letters}).

In reverse direction, if we have assigned $\w_{i,k,T}$ (resp., $\w_{i,k,F}$) to 
$hSlots_{j,2}^{i,k}$, then we know that $f(x_i)= F$ (resp., $f(x_i)= T$) and  $l=x_i$ (resp., $l=\neg x_i$) thus, $l$ is false under the truth assignment (\ref{assignmentproof3letters}). \end{claimproof}

Based on the previous claim, we will show that each clause has exactly one true literal under the truth assignment $f$ given in 
(\ref{assignmentproof3letters}).

For any $j \in [m]$ there are exactly $nl_j$ pairs $(i,k)$ where $i \in [n]$ and $k \in [a_i]$ such that the $k$-th appearance of $x_i$ is in 
$c_j$. Let $C_j$ be the set that contains contains all these pairs $(i,k)$.

Observe that for each pair $(i,k)\in C_j$ there exists a pair of slots  $hSlots_{j,2}^{i,k}$, $vSlots_{j,2}^{i,k}$ which 
share their first cells. 
Because the grid is full, 
the $nl_j$ vertical slots, $vSlots_{j,2}^{i,k}$, where $(i,k) \in C_j$, must contain the words $\w_{j}^t$, $t \in [nl_j]$. One of these words starts with $s_2$ and $nl_j-1$ others start with $s_1$.
Therefore, the same must hold for the words that have been assigned in the slots $hSlots_{j,2}^{i,k}$ for $(i,k) \in C_j$.

Using the previous claim, we know that 
one of the literals in $c_j$ is true and the other $nl_j-1$ are false under the truth assignment \ref{assignmentproof3letters}. Therefore, if we can fill the whole grid, then there exists a truth assignment such that exactly one literal of each clause of $\phi$ is true. \end{proof}
\begin{remark}
In our construction each $\T$ has unique shape\footnote{Two crosses are of the same shape if they are identical: same number of horizontal cells, same number of vertical cells, and same shared cell.} so the problem remains $NP$-hard even in this case.  
\end{remark}



\begin{remark} \label{remark:byrows:without:reuse}
Theorem~\ref{thm:byrows} can be adjusted to work also for the
case where word reuse is not allowed. We simply need to add a suffix of length $\log m$
to all words of length $2k-1$ and add rows to the grid accordingly. Hence, under the ETH, no algorithm can solve this problem in time $m^{o(k)}$, where $k$ is the number of horizontal slots.
\end{remark}


Finally, observe that by filling the slots of a vertex cover of the grid graph, 
all the shared cells are pre-filled. Since there are at most $m^k$ (where $k$ is the size of the vertex cover)
ways to assign words to these slots, by Proposition~\ref{prefilled:shared:cells}, we get the following corollary.



\begin{corollary}
Given a vertex cover of size $k$ of the grid graph we can solve {\CPD} and {\CPO} in time $m^{k}(n+m)^{O(1)}$. Furthermore, as vertex cover we can take the set of horizontal slots.
\end{corollary}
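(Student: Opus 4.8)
The plan is to brute-force the assignment on the vertex cover and reduce each resulting sub-instance to Proposition~\ref{prefilled:shared:cells}. Let $C$ be a vertex cover of the grid graph with $|C|=k$. Recall that each edge of the grid graph corresponds to a shared cell (a cell at the intersection of a horizontal and a vertical slot); hence, since $C$ covers all edges, every shared cell of the grid lies on at least one slot of $C$.

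First I would enumerate every way of assigning to each slot $S\in C$ either a word of $\D$ of length $|S|$ or the empty word — at most $(m+1)^k$ partial assignments. For a fixed partial assignment we discard it if two intersecting slots of $C$ are given words that disagree on their shared cell (and, in the no-reuse setting, also if it uses some word twice); otherwise we record, for every cell covered by a word placed on a slot of $C$, the letter imposed there, and we delete the slots of $C$ from the instance. Because $C$ is a vertex cover, the remaining slots (those outside $C$) form an independent set of the grid graph, so no two of them intersect, and every remaining shared cell lies on a slot of $C$ that received a word and is therefore pre-filled (a cell lying only on a slot of $C$ that got the empty word is no longer shared, since just one slot now passes through it). Thus the residual instance is exactly of the form handled by Proposition~\ref{prefilled:shared:cells}: all its shared cells are pre-filled. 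In the no-reuse case we additionally delete from $\D$ the words already placed on $C$, which does not affect applicability of the proposition.

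For {\CPD} we answer ``yes'' iff for some partial assignment all slots of $C$ get non-empty words and the residual instance can be filled completely; for {\CPO} we output the maximum over all partial assignments of the weight of the words placed on $C$ (counting each shared cell once) plus the optimum of the residual instance, obtained from Proposition~\ref{prefilled:shared:cells}. Each of the $(m+1)^k$ residual instances is solved in polynomial time whether or not reuse is allowed, giving total time $(m+1)^k(n+m)^{O(1)}=m^{k}(n+m)^{O(1)}$ (the extra $+1$ merely encodes leaving a slot of $C$ empty). For the ``furthermore'' claim, observe that horizontal slots pairwise do not intersect and neither do vertical slots, so every edge of the grid graph joins a horizontal to a vertical slot; hence the set of horizontal slots is a vertex cover, and taking $C$ to be this set makes $k$ the number of horizontal slots. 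The only step needing care is the bookkeeping of empty words and, in the no-reuse case, forbidding reuse in the completion of words already placed on $C$; everything else is immediate from Proposition~\ref{prefilled:shared:cells} and the fact that the complement of a vertex cover is an independent set.
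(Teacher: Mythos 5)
Your proposal is correct and follows exactly the paper's argument: the paper also brute-forces the (at most $m^k$, or $(m+1)^k$ with empty words) assignments to the vertex-cover slots, observes that this pre-fills all shared cells of the residual instance, and invokes Proposition~\ref{prefilled:shared:cells}. Your write-up merely spells out the bookkeeping (empty words, consistency within $C$, deleting used words in the no-reuse case) that the paper leaves implicit.
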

Therefore, the bound given in \cref{remark:byrows:without:reuse} for the parameter vertex cover is tight.

\section{Parameterized by Total Number of Slots}\label{sec:slots}

In this section we consider a much more restrictive parameterization of the
problem: we consider instances where the parameter is $n$, the total number of
slots. Recall that in \cref{thm:byrows} (and \cref{remark:byrows:without:reuse}) we already considered the complexity of
the problem parameterized by the number of \emph{horizontal} slots of the
instance. We showed that this case of the problem cannot be solved in
$m^{o(k)}$ and that an algorithm with running time roughly $m^k$ is possible
whether word reuse is allowed or not. 


Since parameterizing by the number of horizontal slots is not sufficient to render the problem FPT, we therefore consider our parameter to be the total number of slots. This is, finally, sufficient to obtain a simple FPT algorithm.

\begin{corollary}\label{corollary:algslots} There is an algorithm that solves {\CPD}
and {\CPO} in time $O^*(\ell^{n^2/4})$, where $n$ is the total number of slots
and $\ell$ the size of the alphabet, whether word reuse is allowed or not.
\end{corollary}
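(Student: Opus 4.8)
The plan is to reduce to the pre-filled case handled by Proposition~\ref{prefilled:shared:cells} by brute-forcing the contents of all shared cells. First I would bound the number of shared cells. Since horizontal slots are pairwise non-intersecting and vertical slots are pairwise non-intersecting, every shared cell lies at the intersection of exactly one horizontal slot and exactly one vertical slot, so it is uniquely identified by this pair. Hence, if the instance has $n_h$ horizontal and $n_v$ vertical slots, the number of shared cells is at most $n_h \cdot n_v$. Since $n_h + n_v = n$, the AM-GM inequality gives $n_h \cdot n_v \le \left(\frac{n}{2}\right)^2 = \frac{n^2}{4}$.

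Next I would enumerate every way of writing a letter of $\L$ into each shared cell; there are at most $\ell^{n_h n_v} \le \ell^{n^2/4}$ such assignments. For each of them we obtain an instance of {\CPO} (resp. {\CPD}) in which all shared cells are pre-filled. By Proposition~\ref{prefilled:shared:cells}, each such instance can be solved in polynomial time, whether or not word reuse is allowed. For {\CPO} we keep the maximum-weight solution found over all assignments; for {\CPD} we answer ``yes'' iff some assignment admits a feasible completion. Correctness is immediate: any feasible solution of the original instance induces a particular assignment of letters to the shared cells, and conditioned on that assignment the original feasibility/optimization problem coincides with the pre-filled one; conversely, any solution produced by the procedure is feasible for the original instance, since it is by construction consistent on the shared cells (which are the only cells where two slots interact).

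For the running time, the outer enumeration costs $\ell^{n^2/4}$, and for each assignment we invoke the polynomial-time routine of Proposition~\ref{prefilled:shared:cells} together with polynomial-time bookkeeping, all polynomial in $n+m$ under the succinct encoding discussed in Section~\ref{sec:prel}. This yields total time $O^*(\ell^{n^2/4})$.

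There is no genuinely hard step here; the only point requiring a little care is the counting argument that pins the exponent to $n^2/4$ rather than the looser $O(n^2)$ bound mentioned earlier --- this needs the observation that same-orientation slots do not intersect, so that shared cells inject into the set of (horizontal slot, vertical slot) pairs, together with the balancing of $n_h$ against $n_v$ via AM-GM. Everything else is a direct appeal to Proposition~\ref{prefilled:shared:cells}.
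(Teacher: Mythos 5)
Your proposal is correct and follows exactly the same route as the paper's proof: bound the number of shared cells by $n_h n_v \le n^2/4$, enumerate all $\ell^{n^2/4}$ letter assignments to those cells, and solve each resulting pre-filled instance in polynomial time via Proposition~\ref{prefilled:shared:cells}. You simply spell out the AM-GM counting and the correctness argument that the paper leaves implicit.
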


\begin{proof}
Since there are $n$ slots in the instance, even if the grid 
is a complete bipartite graph, the instance contains at most $n^2/4$ cells
which are shared between two slots. In time $\ell^{n^2/4}$ we consider all
possible letters that could be placed in these cells. Finally, as we have shown in Proposition~\ref{prefilled:shared:cells}, each of these instances can be solved in polynomial time.
\end{proof}

Even though the running time guaranteed by \cref{corollary:algslots} is FPT for
parameter $n$, we cannot
help but observe that the dependence on $n$ is rather disappointing, as our
algorithm is exponential \emph{in the square} of $n$. It is therefore a natural
question whether an FPT algorithm for this problem can achieve complexity
$2^{o(n^2)}$, assuming the alphabet size is bounded. The main result of this
section is to establish that this is likely to be impossible.

\subparagraph*{Overview} Our hardness proof consists of two steps. In the first
step we reduce \textsc{3-SAT} to a version of the same problem where variables
and clauses are partitioned into $O(\sqrt{n+m})$ groups, which we call
\textsc{Sparse 3-SAT}. The key property of this intermediate problem is that
interactions between groups of variables and groups of clauses are extremely
limited. In particular, for each group of variables $V_i$ and each group of
clauses $C_j$, at most one variable of $V_i$ appears in a clause of $C_j$. We
obtain this rather severe restriction via a randomized reduction that runs in
expected polynomial time. The second step is to reduce \textsc{Sparse 3-SAT} to
{\CPD}. Here, every horizontal slot will represent a group of variables and
every vertical slot a group of clauses, giving $O(\sqrt{n+m})$ slots in total.
Hence, an algorithm for {\CPD} whose dependence on the total number of slots is
subquadratic in the exponent will imply a sub-exponential time (randomized)
algorithm for \textsc{3-SAT}. The limited interactions between groups of
clauses and variables will be key in allowing us to execute this reduction
using a \emph{binary} alphabet.

Let us now define our intermediate problem.

\begin{definition}\label{def:sparse} In \textsc{Sparse 3-SAT} we are given an
integer $n$ which is a perfect square and a \textsc{3-SAT} formula $\phi$ with
at most $n$ variables and at most $n$ clauses, such that each variable appears
in at most $3$ clauses. Furthermore, we are given a partition of the set of
variables $V$ and the set of clauses $C$ into $\sqrt{n}$ sets
$V_1,\ldots,V_{\sqrt{n}}$ and $C_1,\ldots,C_{\sqrt{n}}$ of size at most
$\sqrt{n}$ each, such that for all $i,j\in [\sqrt{n}]$ the number of variables
of $V_i$ which appear in at least one clause of $C_j$ is at most one.
\end{definition}

Now, we are going to prove the hardness of \textsc{Sparse 3-SAT}, which is the
first step of our reduction.

\begin{lemma}\label{lem:sparse-hard}Suppose the randomized ETH is true. Then,
there exists an $\epsilon>0$ such that \textsc{Sparse 3-SAT} cannot be solved
in time $2^{\epsilon n}$. \end{lemma}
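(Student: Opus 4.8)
The plan is to reduce from standard \textsc{3-SAT} (with $N$ variables and $M$ clauses), which under the randomized ETH cannot be solved in randomized expected time $2^{o(N+M)}$, and produce in expected polynomial time an equivalent instance of \textsc{Sparse 3-SAT} with parameter $n = O(N+M)$. A subexponential-in-$n$ algorithm for \textsc{Sparse 3-SAT} would then yield a subexponential-in-$(N+M)$ randomized algorithm for \textsc{3-SAT}, contradicting the hypothesis. First I would handle the bounded-occurrence requirement: apply the classical sparsification/occurrence-reduction gadget (replace each variable $x$ occurring $t$ times by $t$ fresh copies $x_1,\dots,x_t$ tied together by an implication cycle $(\neg x_1\lor x_2)\land\dots\land(\neg x_t\lor x_1)$), so that every variable appears at most $3$ times; this is linear-size and answer-preserving, so we may assume our starting formula $\phi$ already has $\le n_0$ variables and $\le n_0$ clauses with each variable in at most $3$ clauses, for some $n_0 = O(N+M)$.

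The heart of the argument is the randomized grouping. Fix $q = \lceil \sqrt{n_0}\,\rceil$ (and let $n = q^2$, padding with dummy variables/clauses so there are exactly $n$ of each, which only changes things by a constant factor). Independently and uniformly at random, assign each variable to one of $q$ groups $V_1,\dots,V_q$ and each clause to one of $q$ groups $C_1,\dots,C_q$. The key claim is that with probability bounded below by a positive constant, the resulting partition satisfies the crucial sparsity property: for every pair $(i,j)$, at most one variable of $V_i$ occurs in some clause of $C_j$. To see why a single random trial succeeds with constant probability, I would bound the probability of a ``bad'' configuration: a bad event is a pair of \emph{distinct} variables $x,x'$ together with clauses $c\ni x$, $c'\ni x'$ (where $c,c'$ may coincide or not) such that $x,x'$ land in the same variable-group and $c,c'$ land in the same clause-group. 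Since each variable appears in $\le 3$ clauses, the number of (variable, incident-clause) pairs is $\le 3n_0 = O(n)$, so the number of candidate bad configurations is $O(n^2)$; each occurs with probability $\le \frac{1}{q}\cdot\frac{1}{q} = \frac{1}{n}$ (the two variables collide with probability $1/q$, the two clauses collide with probability $\le 1/q$, and these are independent). Hence the expected number of bad configurations is $O(n^2)\cdot\frac1n = O(n)$ — which is \emph{not} $o(1)$, so a naive union bound fails. The hard part, and the place I expect to spend the most care, is exactly this: controlling the collision probability tightly enough. The fix is to increase the number of groups by a constant (or logarithmic) factor — take $q = c\sqrt{n_0}$ for a large constant $c$, or even $q = \Theta(\sqrt{n_0\log n_0})$ — so that the expected number of bad configurations drops below, say, $1/2$; then by Markov's inequality a random trial is good with probability $\ge 1/2$. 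With $q = \Theta(\sqrt{n_0\log n_0})$ we get $n = q^2 = \Theta(n_0\log n_0)$, and a $2^{o(n)}$ algorithm still contradicts the randomized ETH since $2^{o(n_0\log n_0)}$ is not subexponential-barring in $n_0$ only if we are careful — so in fact one should push for $q = \Theta(\sqrt{n_0})$ with a large constant, keeping $n = \Theta(n_0)$; I would verify that a sufficiently large constant $c$ makes the expected count of bad configurations at most $1/2$, exploiting that the true count of candidate bad configurations is $O(n_0)$ rather than $O(n_0^2)$ when one groups them by the clause (each clause has $\le 3$ variables, each variable $\le 3$ clauses, bounding the relevant interaction structure).

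Once a good partition is found, verifying the sparsity property takes polynomial time; if the trial is bad we simply resample, so the overall construction runs in expected polynomial time, as required for a reduction under the randomized ETH. It remains to note that satisfiability is preserved trivially — we did not change the formula, only partitioned it — so $\phi$ is satisfiable if and only if the produced \textsc{Sparse 3-SAT} instance is. Putting the pieces together: if \textsc{Sparse 3-SAT} on parameter $n$ could be solved in time $2^{\epsilon n}$ for every $\epsilon>0$, then composing with the expected-polynomial-time reduction gives a randomized algorithm deciding \textsc{3-SAT} with $N$ variables and $M$ clauses in expected time $2^{\epsilon' (N+M)}$ for every $\epsilon'>0$, contradicting the randomized ETH; hence there is a fixed $\epsilon>0$ for which no $2^{\epsilon n}$ algorithm exists. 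The only genuinely delicate point throughout is the probabilistic estimate showing one random partition is good with constant probability while keeping $n$ linear in $N+M$.
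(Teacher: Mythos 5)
There is a genuine gap, and it sits exactly where you flagged the ``delicate point.'' Your plan is to resample a uniformly random partition into $q=\Theta(\sqrt{n_0})$ groups until \emph{no} pair $(V_i,C_j)$ has two variables of $V_i$ meeting $C_j$, arguing that a large constant in $q$ drives the expected number of bad configurations below $1/2$. This cannot work. A bad configuration is a pair of incidences $(x,c),(x',c')$ of the incidence graph; since that graph has $\Theta(n_0)$ edges, there are $\Theta(n_0^2)$ such pairs, each colliding with probability $\Theta(1/q^2)=\Theta(1/n_0)$, so the expected number of collisions is $\Theta(n_0)$ regardless of the constant (and still $\Theta(n_0/\log n_0)$ with $q=\Theta(\sqrt{n_0\log n_0})$). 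Your claim that the count of candidate bad configurations is ``really $O(n_0)$'' when grouped by clause is false. Indeed the obstruction is already visible for pairs of variables inside a \emph{single} clause: there are $\Theta(n_0)$ such pairs, each landing in the same variable group with probability $1/q=\Theta(1/\sqrt{n_0})$, so the expected number of violations is $\Theta(\sqrt{n_0})$ and, by a birthday-paradox/second-moment argument, violations occur with high probability. No amount of resampling produces a perfect partition with only $O(\sqrt{n_0})$ groups, and using $\omega(\sqrt{n_0})$ groups blows up the Sparse 3-SAT parameter $n\approx q^2$ past $O(n_0)$, destroying the reduction.

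The missing idea, which is how the paper proceeds, is to \emph{not} insist on a perfect partition. One random coloring is analysed as follows: each color class has size $O(\sqrt{N}/C)$ with high probability (Chernoff), and the expected number of ``non-good'' edges (those whose color pair is shared by another edge) is $O(|E|/C^2)$; by Markov, with constant probability one can \emph{delete} a set of at most $2\epsilon N$ vertices hitting all conflicts. The deleted (uncolored) variables and clauses are then eliminated by brute force: branch on the $\le 2$ truth values of each deleted variable and the $\le 3$ satisfying literals of each deleted clause, producing at most $2^{O(\epsilon(n+m))}$ subformulas, each of which \emph{is} a valid \textsc{Sparse 3-SAT} instance on the remaining colored vertices. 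Since $\epsilon$ can be taken as small as desired relative to the target $\epsilon'$, this branching overhead is absorbed into the final $2^{\epsilon'(n+m)}$ bound. Without this deletion-plus-branching step your reduction does not go through.
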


The first step of our reduction will be to prove that \textsc{Sparse 3-SAT}
cannot be solved in sub-exponential time (in $n$) under the randomized ETH, via
a reduction from \textsc{3-SAT}. To do this, we will need the following
combinatorial lemma.

\begin{lemma}\label{lem:color}

For each $\epsilon>0$ there exists $C>0$ such that for sufficiently large $n$ we have the following. There exists a randomized algorithm running in expected polynomial time which, given a bipartite graph $G=(A,B,E)$ such that
$|A|=|B|=n$ and the maximum degree of $G$ is $3$, produces a set $V'\subseteq A\cup B$ with $|V'|\ge 2(1-\epsilon)n$ and a coloring $c:V'\to [k]$ of the vertices of $V'$ with $k$ colors, where $k\le C\sqrt{n}$, such that for all $i\in [k]$ we have $|c^{-1}(i)|\le\sqrt{n}$ and
for all $i,j\in [k]$ the graph induced by $c^{-1}(i)\cup c^{-1}(j)$ contains at
most one edge.   \end{lemma}

\begin{proof}

Let $k = C \lceil \sqrt{n} \rceil$, where $C$ is a sufficiently large constant
(depending only on $\epsilon$) to be specified later. We color each vertex of
the graph uniformly at random from a color in $[k]$, call this coloring $c$.
Let $X_{i,j}$ be the set of edges which have as endpoints a vertex of color $i$
and a vertex of color $j$. 

Our algorithm is rather simple: initially, we set $V'=V$. Then, for each
$i,j\in [k]$ we check whether $X_{i,j}$ contains at most one edge. If yes, we
do nothing; if not, we select for each edge $e\in X_{i,j}$ an arbitrary
endpoint and remove that vertex from $V'$. In the end we return the set $V'$
that remains and its coloring. It is clear that this satisfies the property
that $c^{-1}(i)\cup c^{-1}(j)$ contains at most one edge for the graph induced
by $V'$ for all $i,j\in [k]$, so what we need to argue is that (i)
$|c^{-1}(i)|\le \sqrt{n}$ for all $i$ with high probability and (ii) that $V'$
has the promised size with at least constant probability.  If we achieve this
it will be sufficient to repeat the algorithm a polynomial number of times to
obtain the claimed properties with high probability, hence we will have an
expected running time polynomial in $n$.

For the first part, fix an $i\in [k]$ and observe that $E[ |c^{-1}(i)| ]
\le\frac{2\sqrt{n}}{C}$.  To prove that all $|c^{-1}(i)|$ are of size at most $4
\sqrt{n}/C$ with high probability (and hence also at most $\sqrt{n}$ for $C$
sufficiently large), we will use Chernoff's Inequality.

\begin{proposition}[Chernoff's Inequality]

Let $X$ be a binomial random variable and $\epsilon >0$. Then $P[|X-E[X]| > \epsilon E[X]] < 2e^{-\epsilon^2 E[X]/3}$
\end{proposition}

We take $\epsilon = 1$. It follows
that $P[|c^{-1}(i)| > 4 \sqrt{n}/C] \leq 2 e^{-2\sqrt{n}/3C}$. Now, taking the union bound, we
obtain that almost surely for all color $i$, $|c^{-1}(i)| < 4 \sqrt{n}/C$ 

The more interesting part of this proof is to bound the expected size of $V'$.
Let $e$ be an edge whose endpoints are colored with colors $i$ and $j$. We say
that $e$  is \emph{good} if no other edge in $G$ has one endpoint colored $i$
and the other colored $j$ by the coloring $c$. Let $u$ and $v$ be the endpoints
of $e$.  The probability of another edge having endpoints of colors $i$ and $j$
in the graph $G-\{u,v\}$ is at most $\frac{2|E|}{C^2 n} \leq \frac{6}{C^2}$.
The probability that at least one of the at most four edges incident to $e$ has
endpoints colored $i$ and $j$ is at most $\frac{4}{C\sqrt n}$.  Thus, the
probability that $e$ is good is at least $ 1 - \frac{6}{C^2} - \frac{4}{C\sqrt
n} > 1 - \frac{7}{C^2}$, if $n$ is sufficiently large.  Let $X$ be the number
of edges which are not good.  Then, $E[X] \leq 7C^{-2} |E|$. By Markov's
Inequality $P[X > 21C^{-2}|E| ] < 1/3$.  Thus, with probability at least $2/3$,
our algorithm will remove at most $21C^{-2}|E| \le 63C^{-2}n$ vertices.  Since
we have promised to remove at most $2\epsilon n$ vertices, it suffices to
select any value $C\ge \frac{8}{\sqrt{\epsilon}}$.  \end{proof}

Now, we present the proof of Lemma~\ref{lem:sparse-hard}

\begin{proof}

Suppose that the statement is false, therefore for any $\epsilon>0$ we can
solve \textsc{Sparse 3-SAT} in which the number of variables and clauses can be
upper-bounded by $N$ in expected time $2^{\epsilon N}$ using some supposed
algorithm.  Fix an arbitrary $\epsilon'>0$.  We will show how to solve an
arbitrary instance of \textsc{3-SAT} with $n$ variables and $m$ clauses in
expected time $2^{\epsilon'(n+m)}$ using this supposed algorithm for
\textsc{Sparse 3-SAT}.  If we can do this for any arbitrary $\epsilon'$, this
will contradict the randomized ETH.

Start with an arbitrary \textsc{3-SAT} instance $\phi$ with $n$ variables and
$m$ clauses. We first edit $\phi$ to ensure that each variable appears at most
three times. In particular, if $x$ appears $k>3$ times, we replace each
appearance of $x$ with a fresh variable $x_i$, $i\in [k]$, and add the clauses
$(\neg x_1\lor x_2)\land (\neg x_2\lor x_3)\land\ldots \land(\neg x_k\lor
x_1)$.  

The number of variables in the new instance is at most $n+3m$. The number of
clauses is at most $4m$. This is because every new clause and every new
variable corresponds to an occurrence of an original variable in an original
clause and there are at most $3m$ such occurrences.

We now have an instance $\phi'$ equivalent to $\phi$ with at most $n+3m$
variables and at most $4m$ clauses, such that each variable appears at most $3$
times. Let $N$ be the smallest perfect square such that $N\ge n+4m$. We have
$N<10(n+m)$. What we need now is to produce a partition of the vertices and
clauses of $\phi'$.

In order to produce this partition we invoke \cref{lem:color} on the incidence
graph of $\phi'$, that is, the bipartite graph where we have variables on one
side and clauses on the other, and edges signify that a variable appears in a
clause. Add some dummy isolated vertices on each side so that both sides of the
incidence graph contain $N$ vertices. We invoke \cref{lem:color} by setting
$\epsilon$ to be $\epsilon'/80$. We obtain a coloring of all but at most
$\frac{\epsilon'N}{40}\le \frac{\epsilon'(n+m)}{4}$ of the vertices of the
incidence graph.
 
Let $U$ be the set of variables and clauses that correspond to uncolored
vertices of the incidence graph. Then, for each such variable we produce two
formulas (one by setting it to True and one by setting it to False), and for
each such clause, at most $3$ formulas (one by setting each of the literals of
the clause to True). We thus construct at most $3^{\epsilon'(n+m)/4}\le
2^{\epsilon'(n+m)/2}$ new formulas, such that one of them is satisfiable if and
only if $\phi$ was satisfiable. We will then use the supposed algorithm for
\textsc{Sparse 3-SAT} to decide each of these formulas one by one.

Each new formula we have contains at most $N$ variables and at most $N$
clauses, and by \cref{lem:color} we have partitions of the variables and
clauses into $C \sqrt{N}$ groups, where $C$ is a constant (that depends on
$\epsilon'$). By setting $N'=\lceil C\rceil^2N$ we can view these instances as
instances of \textsc{Sparse 3-SAT}, because then the number of groups becomes
equal to the square root of the upper bound on the number of variables and
clauses, and by the properties of \cref{lem:color} there is at most one edge
between each group of variables and each group of clauses.  Since we suppose
that for all $\epsilon>0$ such instances can be solved in time $2^{\epsilon N'}$, by
setting $\epsilon =\epsilon'/50\lceil C\rceil ^2$ we can solve each formula in
$2^{\epsilon' (n+m)/5}$. The total expected running time of our algorithm is at
most $2^{\epsilon'(n+m)/2} \cdot 2^{\epsilon' (n+m)/5} \cdot (n+m)^{O(1)} \le
2^{\epsilon'(n+m)}$, so we contradict the ETH.  \end{proof}

We are now ready to prove the main theorem of this section.

\begin{theorem}\label{thm:all:slots:rand:eth}

Suppose the randomized ETH is true. Then, there exists an $\epsilon>0$ such
that {\CPD} on instances with a binary alphabet cannot be solved in time
$2^{\epsilon n^2}\cdot m^{O(1)}$.  This holds also for instances where all
slots have distinct sizes (so words cannot be reused).

\end{theorem}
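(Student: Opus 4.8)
The plan is to reduce from \textsc{Sparse 3-SAT}, which by Lemma~\ref{lem:sparse-hard} cannot be solved in time $2^{\epsilon n}$ under the randomized ETH. Given a \textsc{Sparse 3-SAT} instance with variable groups $V_1,\ldots,V_{\sqrt n}$ and clause groups $C_1,\ldots,C_{\sqrt n}$, the construction will have one horizontal slot $h_i$ for each variable group $V_i$ and one vertical slot $v_j$ for each clause group $C_j$, so the total number of slots is $O(\sqrt n)$; an algorithm for \textsc{CP-Dec} running in $2^{\epsilon' (\#\text{slots})^2}\cdot m^{O(1)} = 2^{O(\epsilon' n)}\cdot m^{O(1)}$ would then contradict the randomized ETH for small enough $\epsilon'$. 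The intersection cell of $h_i$ and $v_j$ will encode the single (by the sparseness property) possible interaction between $V_i$ and $C_j$: namely, the truth value of the unique variable of $V_i$ that occurs in some clause of $C_j$. To force distinct slot sizes (hence no reuse), I would append pairwise-distinct padding gadgets of logarithmic length to the slots, as in Remark~\ref{remark:byrows:without:reuse}.

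**First I would** design the dictionary for horizontal slots. A word placeable in $h_i$ should encode a complete truth assignment to the variables of $V_i$ (there are at most $2^{\sqrt n}$ such assignments, which is fine since the dictionary size $m$ is allowed to be exponential in $n$ — more precisely, $m = 2^{O(\sqrt n)}$ and the claimed lower bound $2^{\epsilon n^2}m^{O(1)}$ still gives a $2^{\Omega(n)}$-in-the-\textsc{Sparse 3-SAT}-size bound). The cells of $h_i$ are divided into $\sqrt n$ blocks, one per clause group $C_j$; block $j$ is the single cell shared with $v_j$. If the unique variable $x$ of $V_i$ appearing in clauses of $C_j$ occurs positively there, we write its truth value directly; the subtlety is that a variable may occur in up to three clauses spread across (up to three) different clause groups, so consistency of the encoding of $x$'s value across the $\le 3$ relevant blocks is automatically enforced by the fact that a single horizontal word fixes all its cells. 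Dually, a word placeable in $v_j$ encodes, in its $\sqrt n$ cells, a "certificate" that every clause of $C_j$ is satisfied: cell $i$ of $v_j$ must carry the literal value (of the unique variable of $V_i$ relevant to $C_j$) that the word is "expecting," and we only include in the dictionary those vertical words whose expected values are jointly consistent with some assignment satisfying all clauses of $C_j$. Because each clause of $C_j$ has $\le 3$ literals coming from $\le 3$ distinct variable groups, and there are $\le\sqrt n$ clauses, such a word is determined by an assignment to the at most $3\sqrt n$ variables touching $C_j$, and there are at most $2^{O(\sqrt n)}$ of them.

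**The hard part will be** getting the binary alphabet to work while keeping the slot count at $O(\sqrt n)$. With a binary alphabet the single shared cell between $h_i$ and $v_j$ can carry only one bit, so it can encode only the truth value of the unique variable shared between the two groups — this is exactly why the sparseness guarantee of Definition~\ref{def:sparse} is essential, and I would lean on it heavily. The checking that (a) the horizontal words across $h_1,\ldots,h_{\sqrt n}$ describe a single consistent global assignment, and (b) each vertical word certifies satisfaction of its clause group under that same assignment, must be done \emph{entirely through the one-bit intersection cells}, with no extra slots available for bookkeeping. The key observation making this sound is that a variable $x\in V_i$ determines at most three intersection cells (those of $h_i$ with the $\le 3$ clause groups containing a clause that mentions $x$), all of which are cells of the single word placed in $h_i$, so they are forced to agree; and conversely the word in $v_j$ reads off exactly the bits it needs from its intersections with the $h_i$'s. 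I would then verify both directions: from a satisfying assignment of $\phi$, pick the corresponding horizontal words and, for each $j$, a vertical word certifying $C_j$ (one exists since the assignment satisfies $C_j$), and check consistency at every shared cell; conversely, any full filling yields, via the horizontal words, a global assignment, and the existence of a valid vertical word in each $v_j$ witnesses that $C_j$ is satisfied. Finally I would add the distinct-length padding so the statement's "distinct sizes, no reuse" clause holds, and tally the parameters: $O(\sqrt n)$ slots, dictionary size $2^{O(\sqrt n)}$, binary alphabet, yielding the stated $2^{\Omega(n^2)}$-style lower bound in the number of slots.
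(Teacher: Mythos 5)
Your proposal is correct and follows essentially the same route as the paper: one horizontal slot per variable group, one vertical slot per clause group, horizontal words encoding assignments to $V_i$, vertical words encoding satisfying assignments of $V(C_j)$, with the sparseness condition guaranteeing that a single binary intersection cell suffices to communicate the one shared variable. The only cosmetic difference is that the paper obtains distinct slot lengths directly (lengths $2\sqrt{N}+2i$ and $5\sqrt{N}+2j$) rather than via appended padding, but this does not affect the argument.
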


\begin{proof}

Suppose for the sake of contradiction that for any fixed $\epsilon>0$, {\CPD}
on instances with a binary alphabet can be solved in time $2^{\epsilon
n^2}\cdot m^{O(1)}$.  We will then contradict \cref{lem:sparse-hard}. In
particular, we will show that for any $\epsilon'$ we can solve \textsc{Sparse
3-SAT} in time $2^{\epsilon'N}$, where $N$ is the upper bound on the number of
variables and clauses.  Fix some $\epsilon'>0$ and suppose that $\phi$ is an
instance of \textsc{Sparse 3-SAT} with at most $N$ variables and at
most $N$ clauses, where $N$ is a perfect square.  Recall that the variables are
given partitioned into $\sqrt{N}$ sets, $V_1,\ldots, V_{\sqrt{N}}$ and the
clauses partitioned into $\sqrt{N}$ sets $C_1,\ldots, C_{\sqrt{N}}$. In the
remainder, when we write $V(C_j)$ we will denote the set of variables that
appear in a clause of $C_j$.  Recall that the partition satisfies the property
that for all $i,j\in [\sqrt{N}\ ]$ we have $|V_i \cap V(C_j)| \le 1$.
Suppose that the variables of $\phi$ are ordered $x_1,x_2,\ldots, x_N$.

We construct a grid as follows: for each group $V_i$ we construct a horizontal
slot and for each group $C_j$ we construct a vertical slot, in a way that all
slots have distinct lengths. More precisely, the $i$-th horizontal slot, for
$i\in [\sqrt{N}]$ is placed on row $2i-1$, starts in the first column and has
length $2\sqrt{N}+2i$. The $j$-th vertical slot is placed in column $2j-1$,
starts in the first row and has length $5\sqrt{N}+2j$. (As usual, we number the
rows and columns top-to-bottom and left-to-right). Observe that all horizontal
slots intersect all vertical slots; in particular, the cell in row $2i-1$ and
column $2j-1$ is shared between the $i$-th horizontal and $j$-th vertical slot,
for $i,j\in [\sqrt{N}]$.  We define $\L$ to contain two letters $\{0,1\}$.

What remains is to describe the dictionary.

\begin{itemize}

    \item For each $i\in [\sqrt{N} ]$ and for each assignment function
$\sigma:V_i\to \{0,1\}$ we construct a word $w_\sigma$ of length
$2\sqrt{N}+2i$. The word $w_\sigma$ has the letter $0$ in all positions, except
positions $2j-1$, for $j\in [\sqrt{N}]$. For each such $j$, we consider
$\sigma$ restricted to $V_i\cap V(C_j)$. By the properties of \textsc{Sparse
3-SAT}, we have $|V_i\cap V(C_j)|\le 1$. If $V_i\cap V(C_j)=\emptyset$ then we
place letter $0$ in position $2j-1$; otherwise we set in position $2j-1$ the
letter that corresponds to the value assigned by $\sigma$ to the unique
variable of $V_i\cap V(C_j)$.
    
\item For each $j\in [\sqrt{N} ]$ and for each \emph{satisfying} assignment
function $\sigma:V(C_j)\to \{0,1\}$, that is, every assignment function that
satisfies all clauses of $C_j$, we construct a word $w'_\sigma$ of length
$5\sqrt{N}+2j$. The word $w'_\sigma$ has the letter $0$ in all positions,
except positions $2i-1$, for $i\in [\sqrt{N}]$. For each such $i$, we consider
$\sigma$ restricted to $V_i\cap V(C_j)$. If $V_i\cap V(C_j)=\emptyset$ then we
place letter $0$ in position $2i-1$; otherwise we set in position $2i-1$ the
letter that corresponds to the value assigned by $\sigma$ to the unique
variable of $V_i\cap V(C_j)$.

\end{itemize}

The construction is now complete. We claim that if $\phi$ is satisfiable, then
it is possible to fill out the grid we have constructed. Indeed, fix a
satisfying assignment $\sigma$ to the variables of $\phi$. For each $i\in
[\sqrt{N}]$ let $\sigma_i$ be the restriction of $\sigma$ to $V_i$. We place in
the $i$-th horizontal slot the word $w_{\sigma_i}$. Similarly, for each $j\in
[\sqrt{N}]$ we let $\sigma'_j$ be the restriction of $\sigma$ to $V(C_j)$ and
place $w'_{\sigma'_j}$ in the $j$-th vertical slot. Now if we examine the cell
shared by the $i$-th horizontal and $j$-th vertical slot, we can see that it
contains a letter that represents $\sigma$ restricted to (the unique variable
of) $V_i\cap V(C_j)$ or $0$ if $V_i\cap V(C_j)=\emptyset$, and both the
horizontal and vertical word place the same letter in that cell.

For the converse direction, if the grid is filled, we can extract an assignment
$\sigma$ for the variables of $\phi$ as follows: for each $x\in V_i$ we find a
$C_j$ such that $x$ appears in some clause of $C_j$ (we can assume that every
variable appears in some clause). We then look at the cell shared between the
$i$-th horizontal and the $j$-th vertical slot. The letter we have placed in
that cell gives an assignment for the variable contained $V_i\cap V(C_j)$, that
is $x$.  Having extracted an assignment to all the variables, we claim it must
satisfy $\phi$. If not, there is a group $C_j$ that contains an unsatisfied
clause.  Nevertheless, in the $j$-th vertical slot we have placed a word that
corresponds to a \emph{satisfying} assignment for the clauses of $C_j$, call it
$\sigma_j$. Then $\sigma_j$ must disagree with $\sigma$ in a variable $x$ that
appears in $C_j$. Suppose this variable is part of $V_i$.  Then, this would
contradict the fact that we extracted an assignment for $x$ from the word
placed in the $i$-th horizontal slot.

Observe that the new instance has $n=2\sqrt{N}$ slots. If there exists an
algorithm that solves {\CPD} in time $2^{\epsilon n^2} m^{O(1)}$ for any
$\epsilon>0$, we set $\epsilon = \epsilon'/8$ (so $\epsilon$ only depends on
$\epsilon'$) and execute this algorithm on the constructed instance.  We
observe that $m\le 2\sqrt{N}\cdot 7^{\sqrt{N}}$, and that $2^{\epsilon n^2} \le
2^{\epsilon'N/2}$.  Assuming that $N$ is sufficiently large, using the supposed
algorithm for {\CPD} we obtain an algorithm for \textsc{Sparse 3-SAT} with
complexity at most $2^{\epsilon' N}$.  Since we can do this for arbitrary
$\epsilon'$, this contradicts the randomized ETH. \end{proof}

\section{Approximability of {\CPO}} \label{sec:approximability}


This section begins with a $\big(\frac{1}{2}+O(\frac{1}{n})\big)$-approximation algorithm which works when words can, or cannot, be reused. After that, we prove that under the unique games conjecture, an approximation algorithm with a significantly better ratio  is unlikely.

\begin{theorem}\label{cp_opt_is_1/2+approx} {\CPO} is $(\frac{1}{2}+\frac{1}{2(\varepsilon n+1)})$-approximable in polynomial time, for all $\varepsilon \in(0,1]$.
\end{theorem}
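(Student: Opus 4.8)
The plan is to improve the trivial $\frac12$-approximation by not being forced to commit \emph{entirely} to horizontal or vertical slots. The key observation is that the $\frac12$ loss in the naive algorithm comes from discarding all slots of one orientation; if instead we discard only a small \emph{subset} of the vertical slots — just enough to break all intersections with the horizontal slots we keep — we can do strictly better. First I would fix $\varepsilon\in(0,1]$ and consider two cases according to whether the number of horizontal slots $n_H$ (resp. vertical slots $n_V$) is large or small relative to $\varepsilon n$. In the "balanced-and-large" regime the naive algorithm already gives something close to $\frac12+\frac{1}{2(\varepsilon n+1)}$ by a counting argument; the interesting regime is when one side, say the vertical slots, is small, $n_V\le \varepsilon n$ (equivalently $n_H\ge(1-\varepsilon)n$, so horizontal slots dominate).

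In that regime, the plan is: take the optimal solution $OPT$ and let $W_H^*$, $W_V^*$ be the weight it collects from horizontal and vertical slots respectively (letters in shared cells counted once, say charged to the horizontal side, so $W_H^*+W_V^*=\mathrm{OPT}$). The algorithm produces two candidate solutions and returns the better. The first candidate is the all-vertical optimum $\sigma_V$ (fill vertical slots optimally, leave horizontals empty), which has weight $\ge W_V^*$. The second candidate is obtained as follows: for each single vertical slot $S_v$, build the solution that fills $S_v$ optimally and then, treating the $O(n)$ cells of $S_v$ that are shared with horizontal slots as pre-filled, fills the horizontal slots optimally by the matching / greedy procedure of Proposition~\ref{prefilled:shared:cells}; among the $n_V$ choices of $S_v$ keep the best one. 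This second candidate has weight at least $\max_{S_v}\big(\text{(opt.\ horizontal fill consistent with }S_v) + w(S_v)\big) \ge W_H^*$ when $S_v$ is chosen to be the vertical slot that $OPT$ fills with largest weight — actually more carefully, if $S_v$ is the heaviest vertical slot in $OPT$ then restricting $OPT$ to the horizontals plus $S_v$ is a feasible solution of this restricted type, so candidate two has weight $\ge W_H^* + \frac{1}{n_V}W_V^* \ge W_H^* + \frac{1}{\varepsilon n}W_V^*$ — but one must be a little careful because the shared-cell weights were charged to $W_H^*$. Let me instead charge shared cells to the vertical side, so that candidate two (horizontals filled consistently with $S_v$) contributes at least the $OPT$-weight of all horizontal slots \emph{excluding} shared cells, plus $w(S_v)$.

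Combining: let $W_H^{**}$ be the weight $OPT$ collects strictly from non-shared cells of horizontal slots, and $W_{\text{rest}} = \mathrm{OPT}-W_H^{**}$ be everything else (vertical words plus shared cells). Candidate one $\ge$ (vertical part of $W_{\text{rest}}$, which is at least $\frac12 W_{\text{rest}}$ in the worst case? no — need the matching bound). The cleanest version: run the naive horizontal-only and vertical-only algorithms \emph{and} the $n_V+n_H$ "one-cross-plus-opposite-side" algorithms described above, and take the best. A short averaging argument over the best vertical-slot choice then shows the returned weight is at least $\frac12\,\mathrm{OPT} + \frac{1}{2(\varepsilon n+1)}\,\mathrm{OPT}$: intuitively, in the dominated side we recover a full $\frac{1}{n_V+1}\ge\frac{1}{\varepsilon n+1}$ fraction of that side's contribution on top of the half we already had. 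All the subroutines run in polynomial time by Proposition~\ref{prefilled:shared:cells} (only $O(n)$ shared cells of a single slot are pre-filled, which is even easier than the $\ell^{O(n^2)}$ bound there), and there are only polynomially many subroutine calls.

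The main obstacle I expect is bookkeeping the weight of shared cells consistently so that the two candidates' guarantees genuinely add up to the claimed $\frac12+\frac1{2(\varepsilon n+1)}$ rather than leaving a multiplicative slack: one has to decide once and for all whether a shared cell's letter weight is attributed to its horizontal or its vertical slot, and make sure each candidate solution's lower bound is stated against that same accounting, while also handling the symmetric case $n_H\le\varepsilon n$ and the "both sides $\ge \varepsilon n$" case where one simply argues that $\max(W_H^*,W_V^*)\ge\frac12\mathrm{OPT}$ already beats the target because $\frac{1}{2(\varepsilon n+1)}$ is small — so actually the naive algorithm alone suffices there. Getting these three cases to dovetail into the single clean bound in the statement, for every $\varepsilon$, is the fiddly part; the algorithmic ideas themselves are elementary.
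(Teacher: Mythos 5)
There is a genuine gap, and it sits exactly where you flagged difficulty: the ``both sides large'' case. You propose to fall back on the naive algorithm there on the grounds that $\frac{1}{2(\varepsilon n+1)}$ is small, but that term is a \emph{multiplicative} fraction of $\mathrm{OPT}$: the theorem demands a solution of weight at least $(\frac12+\frac{1}{2(\varepsilon n+1)})\mathrm{OPT}$, while the naive algorithm only guarantees $\max(W_H^*,W_V^*)\ge\frac12\mathrm{OPT}$, which can be essentially tight. So on instances with, say, $n_H=n_V=n/2$ and $W_H^*=W_V^*$, your case analysis proves nothing beyond the trivial $\frac12$. Relatedly, your ``guess one vertical slot'' step recovers only a $\frac{1}{n_V}$ fraction of the minority side's weight, which matches the required $\frac{1}{\varepsilon n+1}$ only when $n_V\le\varepsilon n+1$; for small $\varepsilon$ the claimed ratio is strictly better than $\frac12+\frac{1}{2n}$, and no amount of single-slot guessing can deliver that.

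The missing idea (and the paper's actual argument) is to exploit that ``polynomial time for fixed $\varepsilon$'' permits running time $m^{O(1/\varepsilon)}$: partition the $n-h$ vertical slots into $r_v=\lceil (n-h)/\lceil 1/\varepsilon\rceil\rceil\le\varepsilon n+1$ groups of size at most $\lceil 1/\varepsilon\rceil$, brute-force all $m^{\lceil 1/\varepsilon\rceil}$ word assignments to each group, and complete each guess by an optimal matching on the horizontal slots. The best group carries at least a $\frac{1}{r_v}\ge\frac{1}{\varepsilon n+1}$ fraction of the optimal vertical contribution (net of shared cells), yielding a candidate of weight at least $W_H^*+\frac{1}{r_v}(W_V^*-W_S^*)$; the symmetric procedure on horizontal slots yields $W_V^*+\frac{1}{r_h}(W_H^*-W_S^*)$, and whichever of $W_H^*,W_V^*$ is larger makes the corresponding candidate at least $\frac{1+1/(\varepsilon n+1)}{2}(W_H^*+W_V^*-W_S^*)=\frac{1+1/(\varepsilon n+1)}{2}\mathrm{OPT}$. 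No case analysis on $n_H$ versus $n_V$ is needed. Your instinct about the shared-cell bookkeeping is correct and is resolved exactly as you suspected: charge $W_S^*$ against the side that is only fractionally recovered, i.e., state the guarantee as $W_H^*+\frac{1}{r_v}(W_V^*-W_S^*)$ and compare it to $\mathrm{OPT}=W_H^*+W_V^*-W_S^*$.
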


\begin{proof}
Fix some  $\varepsilon \in (0,1]$. Let $k_v:=\min (\lceil \frac{1}{\varepsilon} \rceil,n-h)$ and $r_v:=\lceil \frac{n-h}{k_v} \rceil$,
where $h$ is the number of horizontal slots in the grid. 
Create $r_v$ groups of vertical slots $G_1, \ldots , G_{r_v}$ such that $|G_i| \le k_v$ for all $i \in [r_v]$ and $G_1 \cup \ldots \cup G_{r_v}$ covers the entire set of vertical slots. 
For each $G_i$, guess an optimal choice of words, i.e., identical to a global optimum, and complete this partial solution by filling the 
horizontal slots (use the aforementioned matching technique where the words selected for $G_i$ are excluded from $\D$). Each slot of $\bigcup_{j\neq i} G_j$ gets the empty word.

Since $|G_i| \le k_v$, guessing an optimal choice of words for $G_i$ by brute force requires at most $m^{k_v}$ combinations.  This is done $r_v$ times (once for each $G_i$). The maximum matching runs in time ${\cal O} ((m+n)^2 \cdot m  n)$. In all, the time complexity of the algorithm is ${\cal O} (m^{k_v} \cdot {r_v} \cdot (m+n)^2 \cdot m  n) \le  {\cal O} (m^{1/\varepsilon} \cdot \varepsilon n \cdot (m+n)^2 \cdot m  n)$.

Assume that, given an optimal solution, $W^*_H$ and $W^*_V$ are the total weight of the words assigned to the horizontal and vertical slots, respectively, both including the shared cells. Furthermore, let $W^*_S$ be the weight of the letters assigned to the shared cells in the optimal solution. Observe that the weight of the optimal solution is $W^*_H+ W^*_V -W^*_S$ and the weight of our solution is at least $W^*_H+\frac{1}{r_v}(W^*_V-W^*_S)$.



We repeat the same process, but the roles of vertical and horizontal slots are interchanged. 
Fix  a parameter $k_h:=\min (\lceil \frac{1}{\varepsilon} \rceil,h)$. 
Create $r_h:=\lceil \frac{h}{k_h} \rceil$ groups of horizontal slots $G_1, \ldots , G_{r_h}$ such that
$|G_i| \le k_h$ for all $i \in [r_h]$ and $G_1 \cup \ldots \cup G_{r_h}$ covers the entire set of horizontal slots. For each $G_i$, guess an optimal choice of words and complete this partial solution 
by filling the vertical slots. Each slot of $\bigcup_{j\neq i} G_j$ gets the empty word.  

Using the same arguments as above, we can conclude that the time complexity is $ { O} (m^{1/\varepsilon} \cdot \varepsilon n \cdot (m+n)^2 \cdot m  n)$ and that we return a solution of weight at least $W^*_V+\frac{1}{r_h} (W^*_H- W^*_S)$.

Finally, between the two solutions, we return the one with the greater weight. It remains to argue about the approximation ratio. We need to consider two cases: $W^*_H \ge W^*_V$ and $W^*_V > W^*_H$.

Suppose $W^*_H \ge W^*_V$. The first approximate solution has value $W^*_H+\frac{1}{r_v} (W^*_V- W^*_S) \ge \frac{1+1/r_v}{2}(W^*_H + W^*_V- W^*_S)$. If $k_v=n-h$  then $r_v=1$ and our approximation ratio is $1$. Otherwise, $k_v=\lceil \frac{1}{\varepsilon} \rceil$ and  $r_v=\lceil \frac{n-h}{\lceil {1}/{\varepsilon} \rceil} \rceil \le 
\frac{n-h}{\lceil {1}/{\varepsilon} \rceil} +1=\frac{n-h+\lceil {1}/{\varepsilon} \rceil}{\lceil {1}/{\varepsilon} \rceil}$. It follows that $\frac{1}{r_v} \ge 
\frac{\lceil {1}/{\varepsilon} \rceil}{n-h+\lceil {1}/{\varepsilon} \rceil}$. Use $n-h+\lceil {1}/{\varepsilon} \rceil \le n+\frac{1}{\varepsilon}$ and $\lceil {1}/{\varepsilon} \rceil \ge {1}/{\varepsilon}$ to get that $\frac{1}{r_v} \ge \frac{{1}/{\varepsilon}}{n+{1}/{\varepsilon}}=\frac{1}{\varepsilon n+1}$. Our approximation ratio is
at least $\frac{1+1/(\varepsilon n+1)}{2}$.    

Suppose $W^*_V > W^*_H$. The second approximate solution has value $W^*_V+\frac{1}{r_h} (W^*_H- W^*_S) > \frac{1+1/r_h}{2}(W^*_H + W^*_V- W^*_S)$. If $k_h=h$, then our approximation ratio is $1$. Otherwise, $k_h=\lceil \frac{1}{\varepsilon} \rceil$ and, using the same arguments, our approximation ratio is at least $\frac{1+1/(\varepsilon n+1)}{2}$.

Note that $\frac{1+1/(\varepsilon n+1)}{2} \le 1$. 
In all, we have a $\frac{1+1/(\varepsilon n+1)}{2}$-approximate solution in ${\cal O} (m^{1/\varepsilon} \cdot \varepsilon n \cdot (m+n)^2 \cdot m  n)$ for all $\varepsilon \in (0,1]$. \end{proof}

The previous approximation algorithm 
only achieves an
approximation ratio of $\frac{1}{2}+O(\frac{1}{n})$, which tends to
$\frac{1}{2}$ as $n$ increases. At first glance this is quite disappointing, as
someone can observe that a ratio of $\frac{1}{2}$ is achievable simply by
placing words only on the horizontal or the vertical slots of the instance. 
Nevertheless, 
we are going to show that this performance is justified, as
improving upon this trivial approximation ratio would falsify the Unique Games
Conjecture (UGC).

Before we proceed, let us recall some relevant definitions regarding Unique
Games. The \textsc{Unique Label Cover} problem is defined as follows: we are
given a graph $G=(V,E)$, with some arbitrary total ordering $\prec$ of $V$, an integer $R$, and for each $(u,v)\in E$ with $u\prec v$ a 1-to-1
constraint $\pi_{(u,v)}$ which can be seen as a permutation on $[R]$. 
The
vertices of $G$ are considered as variables of a constraint satisfaction
problem, which take values in $[R]$. Each constraint $\pi_{(u,v)}$ defines for
each value of $u$ a unique value that must be given to $v$ in order to satisfy
the constraint. The goal is to find an assignment to the variables that
satisfies as many constraints as possible. The Unique Games Conjecture states
that for all $\epsilon>0$, there exists $R$, such that distinguishing instances
of \textsc{Unique Label Cover} for which it is possible to satisfy a
$(1-\epsilon)$-fraction of the constraints from instances where no assignment
satisfies more than an $\epsilon$-fraction of the constraints is NP-hard. In
this section we will need a slightly different version of this conjecture,
which was defined by Khot and Regev as the Strong Unique Games Conjecture.
Despite the name, Khot and Regev showed that this version is implied by the
standard UGC. The precise formulation is the following:

\begin{theorem}\label{thm:ugc}[Theorem 3.2 of \cite{KhotR08}]  If the 
Unique Games Conjecture 
is true, then for all $\epsilon>0$ it is NP-hard to
distinguish between the following two cases of instances of \textsc{Unique
Label Cover} $G=(V,E)$:
\begin{itemize}
\item (Yes case): There exists a set $V'\subseteq V$ with $|V'|\ge
(1-\epsilon)|V|$ and an assignment for $V'$ such that all constraints with both
endpoints in $V'$ are satisfied.

\item (No case): For any assignment to $V$, for any set $V'\subseteq V$ with
$|V'|\ge \epsilon |V|$, there exists a constraint with both endpoints in $V'$
that is violated by the assignment.

\end{itemize}
\end{theorem}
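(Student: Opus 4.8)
The plan is to derive this ``vertex version'' of the Unique Games gap from the standard UGC (the conjecture stated just above in the excerpt) by a gap-preserving reduction. The essential difficulty is that the standard conjecture speaks about the \emph{fraction of edges} that can be satisfied, whereas the two cases above are phrased in terms of \emph{how many vertices} one must delete in order to remove all violated constraints. Concretely, the Yes case asks that deleting at most $\epsilon|V|$ vertices suffices to make the surviving instance fully satisfiable, while the No case asks that, under every assignment, one must delete more than $(1-\epsilon)|V|$ vertices to kill all violations. I would translate both quantities into edge counts, which forces me to control the \emph{geometry} of the constraint graph, not merely the optimum of the label-cover problem.

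The main structural step is to argue that the standard UGC gap persists on instances whose constraint graph is a $d$-regular expander, for a constant degree $d$, arbitrarily good constant spectral expansion, and soundness $\delta$ taken to be as small a constant as we like. I would obtain this in two moves. First, a standard degree-reduction: replace each high-degree variable by a cloud of copies joined by identity (equality) unique constraints and distribute its incident constraints among the copies; this preserves completeness exactly and soundness up to a constant factor. Second, a composition/superposition with an explicit bounded-degree expander so that the whole graph becomes a global expander without destroying the gap. This regularity-plus-expansion step is where essentially all the work lies, and I expect it to be the main obstacle: one must verify that neither the degree reduction nor the expander composition inflates the fraction of simultaneously satisfiable constraints beyond our chosen $\delta$.

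Granting such hard instances, the two cases follow by simple counting. For the Yes case, start from an assignment satisfying all but a $\delta$-fraction of the edges; the violated edges number $\delta|E| = \delta d|V|/2$, so deleting one endpoint from each removes at most $\delta d|V|/2$ vertices, which is at most $\epsilon|V|$ as soon as $\delta \le 2\epsilon/d$, and by construction all surviving edges are satisfied. For the No case, fix any assignment and any $V'$ with $|V'|\ge \epsilon|V|$; by the expander mixing lemma a set of this size spans at least roughly $(\epsilon^2 - \lambda\epsilon/d)|E|$ internal edges, which exceeds $\delta|E|$ provided the expansion is good enough (small $\lambda/d$) and $\delta$ is chosen below a constant multiple of $\epsilon^2$. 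Since in the No case any assignment satisfies at most $\delta|E|$ edges in total, the set $V'$ must contain a violated edge.

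Finally I would tie the parameters together: given the target $\epsilon>0$, fix the constant degree $d$ and the expansion obtained from the structural step, then choose the soundness $\delta$ to be a single constant smaller than both $2\epsilon/d$ and the $\Theta(\epsilon^2)$ threshold above. The standard UGC supplies, for this constant $\delta$, a label range $R$ for which the gap instances are NP-hard, and the expander form of the instance is produced in polynomial time. The two counting arguments then certify precisely the Yes/No dichotomy of the statement, so the reduction is gap-preserving and the claimed NP-hardness follows from the Unique Games Conjecture.
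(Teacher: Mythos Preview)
The paper does not provide its own proof of this theorem: it is quoted verbatim as Theorem~3.2 of Khot and Regev~\cite{KhotR08} and used as a black box in the subsequent reduction (Theorem~\ref{thm:UGC}). So there is no ``paper's proof'' to compare against; you are reconstructing a result that the authors simply import.

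That said, your outline is a reasonable sketch of how one derives the vertex formulation from the standard edge formulation of UGC, and it correctly identifies the real obstacle: without some control on the constraint graph, a large vertex set $V'$ need not span many edges, so the No case does not follow from the edge soundness alone. Reducing to $d$-regular instances is standard and unproblematic; the expander step is the delicate part. You should be aware that the original Khot--Regev argument is organized somewhat differently (it works with weighted, regular bipartite instances and an averaging/propagation argument rather than an explicit expander-mixing bound), but your route via expanderizing the constraint graph is a known alternative and, if carried out carefully, would yield the same conclusion. The parameter bookkeeping you give ($\delta \le 2\epsilon/d$ for the Yes case, $\delta \lesssim \epsilon^2$ for the No case) is correct in spirit. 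Since the paper treats the theorem as an external fact, no further detail is required here.
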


Using the version of the UGC given in Theorem \ref{thm:ugc} we are ready to
present our hardness of approximation argument for the crossword puzzle.

\begin{theorem} \label{thm:UGC}
Suppose that the Unique Games Conjecture is true. Then, for all $\epsilon$ with
$\frac{1}{4}>\epsilon>0$, there exists an alphabet $\Sigma_\epsilon$ such that
it is NP-hard to distinguish between the following two cases of instances of
the crossword problem on alphabet $\Sigma_\epsilon$:

\begin{itemize}

\item (Yes case): There exists a valid solution that fills a
$(1-\epsilon)$-fraction of all cells.

\item (No case): No valid solution can fill more than a
$(\frac{1}{2}+\epsilon)$-fraction of all cells.

\end{itemize}

Moreover, the above still holds if all slots have distinct lengths (and hence
reusing words is trivially impossible).

\end{theorem}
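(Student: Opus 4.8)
The plan is to reduce from the Strong Unique Games Conjecture as formulated in Theorem~\ref{thm:ugc}. Given a \textsc{Unique Label Cover} instance $G=(V,E)$ with label set $[R]$ and permutations $\pi_{(u,v)}$, I would build a crossword instance whose grid is essentially a ``biclique-like'' structure: I would have one horizontal slot $H_v$ for every vertex $v\in V$ and one vertical slot $W_e$ for every edge $e\in E$, arranged so that $H_v$ and $W_e$ intersect precisely when $v$ is an endpoint of $e$ (slots can be laid out in a large enough grid to realize exactly this intersection pattern; as in Theorem~\ref{thm:all:slots:rand:eth} one can pad lengths so that all slots have pairwise distinct sizes and word reuse is automatically impossible). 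The alphabet $\Sigma_\epsilon$ will encode a label together with some bookkeeping symbols; the key design goal is that the letter written in the cell shared by $H_v$ and $W_{(u,v)}$ should encode the label assigned to $v$, and the word placed in $W_{(u,v)}$ is ``valid'' only if the labels it reads off at its two endpoints $u,v$ satisfy the constraint $\pi_{(u,v)}$. Concretely, for each vertex $v$ and each label $a\in[R]$ I put in the dictionary a word for $H_v$ that writes the symbol $a$ in all of $v$'s shared cells; and for each edge $e=(u,v)$ and each label $a\in[R]$ I put a word for $W_e$ that writes $a$ at the $u$-cell and $\pi_e(a)$ at the $v$-cell (and filler symbols elsewhere, chosen so all words stay distinct and of the right length).

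The completeness direction (Yes case) is then straightforward: from a set $V'$ with $|V'|\ge(1-\epsilon)|V|$ and a satisfying assignment $\lambda$ on $V'$, place in each $H_v$ with $v\in V'$ the word encoding $\lambda(v)$, place in each $W_e$ with both endpoints in $V'$ the word encoding $\lambda$ on that edge, and leave the remaining $\le\epsilon|V|$ horizontal slots and all edge-slots incident to $V\setminus V'$ empty. A short counting argument — using that $G$ has bounded (or at least polynomially related) degree, or by weighting cells appropriately so that the horizontal slots dominate the cell count — shows that the filled cells form a $(1-\epsilon')$-fraction of all cells for a suitable $\epsilon'$ comparable to $\epsilon$. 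For the soundness direction (No case), suppose a solution fills more than a $(\frac12+\epsilon)$-fraction of cells. I would argue that the set $V'$ of vertices whose horizontal slot $H_v$ is filled must be ``large'' (of size $\ge\epsilon|V|$ for the appropriate scaling), because the cells lying on unfilled horizontal slots can contribute at most the vertical words passing through them, and one shows that the total cell-weight recoverable from vertical slots alone, beyond the $\frac12$ baseline, is bounded by a term proportional to how few horizontal slots are used. Reading off, for each $v\in V'$, the common symbol written in its shared cells gives an assignment $\lambda$ to $V'$; and whenever an edge-slot $W_{(u,v)}$ with $u,v\in V'$ is also filled, consistency of the shared letters forces $\lambda(v)=\pi_{(u,v)}(\lambda(u))$, i.e. the constraint is satisfied. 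Since in the No case every $\epsilon$-fraction subset of $V$ contains a violated constraint with both endpoints in it, I would derive that some filled $W_{(u,v)}$ with $u,v\in V'$ must exist whose two endpoint cells carry inconsistent symbols — a contradiction with the placement being valid, unless too few cells were filled.

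The main obstacle I anticipate is the soundness bookkeeping: turning ``more than $(\frac12+\epsilon)$ of cells filled'' into ``$\ge\epsilon|V|$ horizontal slots filled \emph{and} enough edge-slots among them filled to invoke the UGC No case.'' The subtlety is that an adversary could try to fill mostly vertical slots (getting close to $\frac12$ for free) plus a sprinkling of horizontal slots chosen to avoid any edge with a violated constraint. To block this I would need the grid dimensions and the relative lengths of horizontal versus vertical slots tuned so that (i) the horizontal slots alone account for just under half the cells and the vertical slots for just over half (or vice versa, symmetrically), and (ii) any solution exceeding $\frac12+\epsilon$ must fill a substantial fraction of slots of \emph{both} orientations — at which point, because each filled horizontal slot $H_v$ determines $\lambda(v)$ and each filled vertical slot $W_e$ with both endpoints determined forces the constraint, the induced sub-instance of \textsc{Unique Label Cover} on $V'$ is fully satisfied, contradicting the No case. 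Getting the constants to line up — choosing $\Sigma_\epsilon$, the padding lengths, and the degree normalization of $G$ so that the Yes/No gap $\epsilon$ of the crossword matches the gap $\epsilon$ of Theorem~\ref{thm:ugc} up to an arbitrarily small loss — is the delicate part; the combinatorial reduction itself is clean.
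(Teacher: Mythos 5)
Your construction differs from the paper's in a way that turns out to be fatal for soundness. You use one horizontal slot per \emph{vertex} and one vertical slot per \emph{edge}, so a constraint $\pi_{(u,v)}$ is only ever checked when the solver chooses to fill the edge slot $W_{(u,v)}$. This is exactly the obstacle you flag at the end, but the fix you sketch (tuning lengths so that a $(\frac12+\epsilon)$-cover must use many slots of both orientations) does not close it. Consider the adversary who fills \emph{all} horizontal slots with some assignment $\lambda$ and then fills every edge slot except those corresponding to constraints violated by $\lambda$. Theorem~\ref{thm:ugc} (even in its iterated form, repeatedly deleting an endpoint of a violated induced edge) only guarantees on the order of $|V|$ violated constraints, not a constant fraction of $|E|$; so even after degree regularization (say degree $3$, $|E|=\tfrac32|V|$) the adversary omits at most about $\tfrac23$ of the edge slots in the worst case and in general covers roughly $\tfrac12+\tfrac{\beta}{2}-\tfrac{(\beta-\epsilon)|V|}{2|E|}$ of the cells, which for $\beta=1$ is about $\tfrac23$ --- well above $\tfrac12+\epsilon$. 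The quantitative statement you would need, namely that every induced subinstance on $\ge\epsilon|V|$ vertices has at most an $\epsilon$-fraction of its \emph{induced} constraints satisfiable, is not what the Strong UGC provides, so the soundness argument cannot be completed along these lines.

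The paper sidesteps this by giving each vertex $v_i$ \emph{both} a horizontal and a vertical slot (so the grid graph is a complete bipartite graph on $n+n$ slots, with no per-edge gadgets at all). The vertical word $\w'_{(j,\alpha)}$ pre-computes, at position $2i$ for every neighbor $v_i$ of $v_j$, the unique label $\beta$ with $\pi_{(v_i,v_j)}(\beta)=\alpha$; hence whenever the $i$-th horizontal and $j$-th vertical slots are both filled, the constraint on the edge $(v_i,v_j)$ is \emph{automatically} enforced at their crossing cell --- there is no optional edge slot to omit. The counting then works: covering more than $(1+\epsilon')n^3+10n^2$ cells forces more than $(1+\epsilon')n$ of the $2n$ slots to be filled, hence at least $\epsilon'n$ indices $i$ have both of their slots filled, and that set $V'$ has \emph{all} of its induced constraints satisfied, directly contradicting the No case. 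If you want to repair your proof, you should adopt this vertex--vertex (rather than vertex--edge) bipartition; your completeness argument and the general framing via Theorem~\ref{thm:ugc} are otherwise on the right track.
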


\begin{proof}

Fix an $\epsilon>0$. We will later define an appropriately chosen value
$\epsilon'\in (0,\epsilon)$ whose value only depends on $\epsilon$. We present
a reduction from a \textsc{Unique Label Cover} instance, as described in
Theorem \ref{thm:ugc}.  In particular, suppose we have an instance $G=(V,E)$,
with $|V|=n$, alphabet $[R]$, such that (under UGC) it is NP-hard to
distinguish if there exists a set $V'$ of size $(1-\epsilon')n$ that satisfies
all its induced constraints, or if all sets $V'$ of size $\epsilon'n$ induce at
least one violated constraint for any assignment. Throughout this proof we
assume that $n$ is sufficiently large (otherwise the initial instance is easy).
In particular, let $n>\frac{20}{\epsilon}$.

We construct an instance of the crossword puzzle that fits in an $N\times N$
square, where $N=4n+n^2$. We number the rows $1,\ldots, N$ from top to bottom
and the columns $1,\ldots,N$ from left to right. The instance contains $n$
horizontal and $n$ vertical slots. For $i\in [n]$, the $i$-th horizontal slot
is placed in row $2i$, starting at column $1$, and has length $2n+n^2+i$. For
$j\in [n]$, the $j$-th vertical slot is placed in column $2j$, starts at row
$1$ and has length $3n+n^2+j$. 
Observe that all horizontal slots intersect all
vertical slots and in particular, for all $i,j\in [n]$ the cell in row $2i$,
column $2j$ belongs to the $i$-th horizontal slot and the $j$-th vertical slot.
Furthermore, each slot has a distinct length, as the longest horizontal slot
has length $3n+n^2$ while the shortest vertical slot has length $3n+n^2+1$.

We define the alphabet as $\Sigma_\epsilon = [R]\cup \{*\}$.
Before we define our dictionary, let us give some intuition. Let
$V=\{v_1,\ldots, v_n\}$. The idea is that a variable $v_i\in V$ of the original
instance will be represented by both the $i$-th horizontal slot and the $i$-th
vertical slot. In particular, we will define, for each $\alpha\in [R]$ a pair of words 
that we can place in these slots to represent the fact that $v_i$ is assigned with the
value $\alpha$. We will then ensure that if we place words on both the $i$-th
horizontal slot and the $j$-th horizontal slot, where $(v_i,v_j)\in E$, then the 
assignment that can be extracted by reading these words will satisfy the
constraint $\pi_{(v_i,v_j)}$. The extra letter $*$  represents an
indifferent assignment (which we need if $(v_i,v_j)\not\in E$).

Armed with this intuition, let us define our dictionary. 

\begin{itemize}

\item For each $i\in [n]$, for each $\alpha\in [R]$ we define a word
$\w_{(i,\alpha)}$ of length $2n+n^2+i$. The word $\w_{(i,\alpha)}$ has the
character $*$ everywhere except at position $2i$ and at positions $2j$ for
$j\in [n]$ and $(v_i,v_j)\in E$. In these positions the word $\w_{(i,\alpha)}$
has the character $\alpha$.

\item For each $j\in [n]$, for each $\alpha\in [R]$ we define a word
$\w'_{(j,\alpha)}$ of length $3n+n^2+j$. The word $\w'_{(j,\alpha)}$ has the
character $*$ everywhere except at position $2j$ and at positions $2i$ for
$i\in [n]$ and $(v_i,v_j)\in E$. In position $2j$ we have the character
$\alpha$. In position $2i$ with $(v_i,v_j)\in E$, we place the character
$\beta\in [R]$ such that the constraint $\pi_{(v_i,v_j)}$ is satisfied by
assigning $\beta$ to $v_i$ and $\alpha$ to $v_j$. (Note that $\beta$ always
exists and is unique, as the constraints are permutations on $[R]$, that is,
for each value $\alpha$ of $v_j$ there exists a unique value $\beta$ of $v_i$
that satisfies the constraint).

\end{itemize}

This completes the construction. Suppose now that $V=\{v_1,\ldots,v_n\}$ and
that we started from the Yes case of \textsc{Unique Label Cover}, that is,
there exists a set $V'\subseteq V$ such that $|V'|\ge (1-\epsilon')n$ and all
constraints induced by $V'$ can be simultaneously satisfied. Fix an assignment
$\sigma : V'\to [R]$ that satisfies all constraints induced by $V'$. For each
$i\in [n]$ such that $v_i\in V'$ we place in the $i$-th horizontal slot (that
is, in row $2i$) the word $\w_{(i,\sigma(v_i))}$.  For each $j\in [n]$ such that
$v_j\in V'$ we place in the $j$-th vertical slot the word
$\w'_{(j,\sigma(v_j))}$. We leave all other slots empty. We claim that this
solution is valid, that is, no shared cell is given different values from its
horizontal and vertical slot. To see this, examine the cell in row $2i$ and
column $2j$. If both of the slots that contain it are filled, then $v_i, v_j\in
V'$. If $(v_i,v_j)\not\in E$ and $i\neq j$, then the cell contains $*$ from
both words. If $i=j$, then the cell contains $\sigma(v_i)$ from both words. If
$i\neq j$ and $(v_i,v_j)\in E$, then the cell contains $\sigma(v_i)$. This is
consistent with the vertical word, as the constraint $\pi_{(v_i,v_j)}$ is
assumed to be satisfied by $\sigma$. We now observe that this solution covers
at least $2(1-\epsilon')n^3$ cells, as we have placed $2(1-\epsilon')n$ words,
each of length at least $n^2+2n$, that do not pairwise intersect beyond their
first $2n$ characters.

Suppose now we started our construction from a No instance of
\textsc{Unique Label Cover}. We claim that the optimal solution in the new
instance cannot cover significantly more than half the cells. In particular,
suppose a solution covers at least $(1+\epsilon')n^3+10n^2$ cells. We
claim that the solution must have placed at least $(1+\epsilon')n$ words.
Indeed, if we place at most $(1+\epsilon')n$ words, as the longest word has
length $n^2+4n$, the maximum number of cells we can cover is
$(1+\epsilon')n(n^2+4n) \le
(1+\epsilon')n^3+4(1+\epsilon')n^2<(1+\epsilon')n^3+10n^2$. Let $x$ be the
number of indices $i\in [n]$ such that the supposed solution has placed a word
in both the $i$-th horizontal slot and the $i$-th vertical slot. We claim
that $x\ge \epsilon' n$. Indeed, if $x<\epsilon' n$, then the total number of
words we might have placed is at most $(n-x)+2x < (1+\epsilon')n$, which
contradicts our previous observation that we placed at least
$(1+\epsilon')n$ words. Let $V'\subseteq V$ be defined as the set of $v_i\in V$
such that the solution places words in the $i$-th horizontal and vertical slot.
Then $|V'|\ge \epsilon' n$. We claim that it is possible to satisfy all the
constraints induced by $V'$ in the original instance, obtaining a
contradiction. Indeed, we can extract an assignment for each $v_i\in V'$ by
assigning to $v_i$ value $\alpha$ if the $i$-th horizontal slot contains the
word $\w_{(i,\alpha)}$. Note that the $i$-th horizontal slot must contain such a
word, as these words are the only ones that have an appropriate length. Observe
that in this case the $i$-th vertical slot must also contain $\w'_{(i,\alpha)}$.
Now, for $v_i,v_j\in V'$, with $(v_i,v_j)\in E$ we see that $\pi_{(v_i,v_j)}$
is satisfied by our assignment, otherwise we would have a conflict in the cell
in position $(2i,2j)$. Therefore, in the No case, it must be impossible
to fill more than $(1+\epsilon')n^3+10n^2$ cells.

The only thing that remains is to define $\epsilon'$. Let $C$ be the total
number of cells in the instance. Recall that we proved that in the Yes case we
cover at least $2(1-\epsilon')n^3$ cells and in the No case at most
$(1+\epsilon')n^3+10n^2$ cells. So we need to define $\epsilon'$ such that
$2(1-\epsilon')n^3\ge (1-\epsilon)C$ and $(1+\epsilon')n^3+10n^2\le
(\frac{1}{2}+\epsilon)C$. To avoid tedious calculations, we observe that
$2n^3\le C \le 2n^3+8n^2$.  Therefore, it suffices to have $2(1-\epsilon')n^3
\ge 2(1-\epsilon)(n^3+4n^2)$ and $(1+\epsilon')n^3+10n^2 \le (1+2\epsilon)
n^3$. The first inequality is equivalent to $(\epsilon-\epsilon')n\ge
4(1-\epsilon)$ and the second inequality is equivalent to
$(2\epsilon-\epsilon')n\ge 10 $. Since we have assumed that $n\ge 20/\epsilon$,
it is sufficient to set $\epsilon'=\epsilon/2$.  \end{proof}

\section{Conclusion}

We studied the parameterized complexity 
of some crossword puzzles
under several different parameters and we gave some positive results followed by proofs which show that our algorithms are essentially optimal. 
Based on our results the most natural questions that arise are:  What is the complexity of {\CPD} 
when the grid graph is a matching and the alphabet has size $2$? Can Theorem~\ref{thm:all:slots:rand:eth} be strengthened by starting from ETH 
instead of randomized ETH? Can we beat the $1/2$ approximation ratio of {\CPO} if we restrict our instances? 
Can Theorem~\ref{thm:ugc} be strengthened by dropping the UGC?  Furthermore, it would be interesting to investigate if there exist non trivial instances of the problem that can be solved in polynomial time. 
Finally, we could consider a variation of the crossword puzzle problems where each word can be used a 
given number of times. This would be an intermediate case between word reuse and no word reuse.   



\bibliography{references}

\end{document}